\theoremstyle{plain}
\newtheorem{theorem}{Theorem}
\newtheorem{definition}{Definition}
\newtheorem{corollary}{Corollary}
\title{Paired sample tests in infinite dimensional spaces}
\date{}
\author{\vspace{0.3in} Anirvan Chakraborty and Probal Chaudhuri}
\begin{document}

\maketitle
\vspace{-0.6in}
\begin{center}
Theoretical Statistics and Mathematics Unit, \\
Indian Statistical Institute \\
203, B. T. Road, Kolkata - 700108, INDIA. \\
emails: vanchak@gmail.com, probal@isical.ac.in
\end{center}
\vspace{0.15in}
\begin{abstract}
The sign and the signed-rank tests for univariate data are perhaps the most popular nonparametric competitors of the t test for paired sample problems. These tests have been extended in various ways for multivariate data in finite dimensional spaces. These extensions include tests based on spatial signs and signed ranks, which have been studied extensively by Hannu Oja and his coauthors. They showed that these tests are asymptotically more powerful than Hotelling's $T^{2}$ test under several heavy tailed distributions. In this paper, we consider paired sample tests for data in infinite dimensional spaces based on notions of spatial sign and spatial signed rank in such spaces. We derive their asymptotic distributions under the null hypothesis and under sequences of shrinking location shift alternatives. We compare these tests with some mean based tests for infinite dimensional paired sample data. We show that for shrinking location shift alternatives, the proposed tests are asymptotically more powerful than the mean based tests for some heavy tailed distributions and even for some Gaussian distributions in infinite dimensional spaces. We also investigate the performance of different tests using some simulated data.
\vspace{0.1in} \\
\textbf{Keywords}: {contaminated data, G\^ateaux derivative, paired sample problem, spatial sign, spatial signed rank, smooth Banach space, t process.}
\end{abstract}

\section{Introduction}
\label{1}
\indent For univariate data, two nonparametric competitors of the t test for one sample and paired sample problems are the sign test and the Wilcoxon signed-rank test. It is well known that these two tests enjoy certain robustness properties, and are more powerful than the t test when the underlying distribution has heavier tails than the Gaussian distribution. The sign and the signed-rank tests have been extended in several ways for data in $\mathbb{R}^{d}$. \citet{PS71} considered extensions based on coordinatewise signs and ranks. \citet{Rand89} and \citet{PR90} studied extensions of the sign test and the signed-rank test, respectively, using the notion of interdirections. \citet{CS93} proposed a class of multivariate extensions of the sign test based on data-driven transformations of the sample observations. \citet{HP02} considered multivariate signed-rank type tests based on interdirections and the ranks of the sample observations computed using pseudo-Mahalanobis distances. \citet{HNO94} and \citet{HMO97} considered multivariate versions of sign and signed-rank tests using simplices (see also \citet{Oja99}). \citet{MO95}, \citet{CCO98} and \citet{Mard99} used spatial signs and signed ranks (see also \citet{Oja10}) to construct extensions of sign and signed-rank tests for multivariate data. \citet{MOT97} and \citet{CCO98} studied the asymptotic efficiency of spatial sign and signed-rank tests relative to Hotelling's $T^{2}$ test. They showed that the tests based on spatial signs and signed ranks are asymptotically more powerful than the $T^{2}$ test under heavy tailed distributions like multivariate t distributions. Further, while Hotelling's $T^{2}$ test is optimal for multivariate Gaussian distributions, it was shown that as the data dimension increases, the performance of the spatial sign and the spatial signed-rank tests become closer to Hotelling's $T^{2}$ test under multivariate Gaussian distributions. \\
\indent Nowadays, we often come across data, which are curves or functions observed over an interval, and are popularly known as functional data. Such data are very different from multivariate data in finite dimensional spaces because the data dimension is much larger than the sample size, and also due to the fact that different sample observations may be observed at different sets of points in the interval. However, this type of data can be conveniently handled by viewing them as a sample in some infinite dimensional space, e.g., the space of real-valued functions defined over an interval. Many of the above-mentioned tests cannot be used to analyze such data. This is because the definitions of some of them involve hyperplanes, simplices etc. constructed using the data, and thus these tests require the data dimension to be smaller than the sample size. Some of the other tests involve inverses of covariance matrices computed from the sample, and such empirical covariance matrices are singular, when the data  dimension is larger than the sample size. \\
\indent Many of the function spaces, where functional data lie, are infinite dimensional Banach spaces. In this paper, we investigate a sign and a signed-rank type test for paired sample problems based on notions of spatial sign and spatial signed rank in such spaces. We derive the asymptotic distributions of the proposed spatial sign and signed-rank statistics under the null hypothesis as well as under suitable sequences of shrinking location shift alternatives. We also compare the asymptotic powers of these tests with some of the paired-sample mean based tests for infinite dimensional data. It is found that these tests outperform the mean based competitors, when the underlying distribution has heavy tails and also for some Gaussian distributions of the data. Some simulation studies are carried out to demonstrate the performance of different tests.

\section{Sign and Signed-rank type tests}
\label{2}
\indent The construction and the study of the paired sample sign and signed-rank type statistics for data in general Banach spaces (finite or infinite dimensional) will require several concepts and tools from functional analysis and probability theory in Banach spaces. Let ${\cal X}$ be a Banach space with norm $||\cdot||$. Let us denote its dual space by ${\cal X}^{*}$, which is the Banach space of real-valued continuous linear functions defined over ${\cal X}$.
\begin{definition}
The norm in ${\cal X}$ is said to be G\^ateaux differentiable at a nonzero ${\bf x} \in {\cal X}$ with derivative, say, ${\bf S}_{{\bf x}} \in {\cal X}^{*}$ if $\lim_{t \rightarrow 0} \ t^{-1}(||{\bf x} + t{\bf h}|| - ||{\bf x}||) = {\bf S}_{{\bf x}}({\bf h})$ for all ${\bf h} \in {\cal X}$. A Banach space ${\cal X}$ is said to be smooth if the norm in ${\cal X}$ is G\^ateaux differentiable at every nonzero ${\bf x} \in {\cal X}$.
\end{definition}
As a convention, we take ${\bf S}_{{\bf x}} = {\bf 0}$ if ${\bf x} = {\bf 0}$. Hilbert spaces and $L_{p}$ spaces for $p  \in (1,\infty)$ are smooth. For a Hilbert space ${\cal X}$, we have ${\bf S}_{{\bf x}} = {\bf x}/||{\bf x}||$. If ${\cal X}$ is a $L_{p}$ space for some $p \in (1,\infty)$, then ${\bf S}_{{\bf x}}({\bf h}) = \int sign\{{\bf x}({\bf s})\}|{\bf x}({\bf s})|^{p-1}{\bf h}({\bf s})d{\bf s}/||{\bf x}||^{p-1}$ for all ${\bf h} \in {\cal X}$.
\begin{definition}
A random element ${\bf X}$ in the Banach space ${\cal X}$ is said to be Bochner integrable if there exists a sequence $\{{\bf X}_{n}\}_{n \geq 1}$ of simple functions in ${\cal X}$ such that ${\bf X}_{n} \rightarrow {\bf X}$ {\it almost surely} and $E(||{\bf X}_{n} - {\bf X}||) \rightarrow 0$ as $n \rightarrow \infty$.
\end{definition}
It is known that the Bochner expectation of ${\bf X}$ exists if $E(||{\bf X}||) < \infty$. We refer to Chapters $4$ and $5$ in \citet{BV10} and Sect. $2$ in Chapter $3$ of \citet{AG80} for more details. Henceforth, the expectation of any Banach space valued random element will be in the Bochner sense. The spatial sign of ${\bf x} \in {\cal X}$ is given by ${\bf S}_{{\bf x}}$, and its spatial rank with respect to the distribution of a random element ${\bf X} \in {\cal X}$ is defined as $\boldsymbol{\Psi}_{{\bf x}} = E({\bf S}_{{\bf x} - {\bf X}})$. If ${\cal X}$ is a Hilbert space, then it follows that $\boldsymbol{\Psi}_{{\bf x}} = E\{({\bf x} - {\bf X})/||{\bf x} - {\bf X}||\}$. In particular, if ${\cal X} = \mathbb{R}^{d}$ equipped with the Euclidean norm, ${\bf S}_{{\bf x}}$ and $\boldsymbol{\Psi}_{{\bf x}}$ reduce to the usual multivariate spatial sign and spatial rank in $\mathbb{R}^{d}$ (see \citet{Oja10}).  \\
\indent Let $({\bf X}_{1},{\bf Y}_{1}), ({\bf X}_{2},{\bf Y}_{2}) \ldots,({\bf X}_{n},{\bf Y}_{n})$ be i.i.d. paired  observations, where the ${\bf X}_{i}$'s and the ${\bf Y}_{i}$'s take values in a smooth Banach space ${\cal X}$. Let ${\bf W}_{i} = {\bf Y}_{i} - {\bf X}_{i}$, $1 \leq i \leq n$, and define $\boldsymbol{\nu} = E({\bf S}_{{\bf W}_{1}})$. A paired sample sign statistic using spatial signs for testing the hypothesis $H_{0}^{(1)} : \boldsymbol{\nu} = {\bf 0}$ against $H_{1}^{(1)} : \boldsymbol{\nu} \neq {\bf 0}$ is defined as
\begin{equation*}
{\bf T}_{S} = n^{-1} \sum_{1 \leq i \leq n} {\bf S}_{{\bf W}_{i}}.
\end{equation*}
We reject $H_{0}^{(1)}$ when $||{\bf T}_{S}||$ is large. Next, define  $\boldsymbol{\theta} = E({\bf S}_{{\bf W}_{1} + {\bf W}_{2}})$. A paired sample signed-rank statistic using spatial signed ranks for testing the hypothesis $H_{0}^{(2)} : \boldsymbol{\theta} = {\bf 0}$ against $H_{1}^{(2)} : \boldsymbol{\theta} \neq {\bf 0}$ is given by
\begin{equation*}
{\bf T}_{SR} = 2\{n(n-1)\}^{-1} \sum_{1 \leq i < j \leq n} {\bf S}_{{\bf W}_{i} + {\bf W}_{j}}.
\end{equation*}
We reject $H_{0}^{(2)}$ for large values of $||{\bf T}_{SR}||$. Recently, a two sample Wilcoxon--Mann--Whitney type test based on spatial ranks in infinite dimensional spaces have been studied by \citet{CC14d}. Note that if ${\cal X} = \mathbb{R}$, ${\bf S}_{x} = sign(x)$. Thus, ${\bf T}_{S}$ and ${\bf T}_{SR}$ reduce to the univariate sign and signed-rank statistics if ${\cal X} = \mathbb{R}$. Moreover, if ${\cal X} = \mathbb{R}^{d}$, then ${\bf T}_{S}$ and ${\bf T}_{SR}$ are the spatial sign and signed-rank statistics for finite dimensional multivariate data studied by \citet{MO95}, \citet{MOT97} and \citet{Mard99}. Note that the hypothesis $H_{0}^{(1)} : \boldsymbol{\nu} = {\bf 0}$ (respectively, $H_{0}^{(2)} : \boldsymbol{\theta} = {\bf 0}$) is equivalent to the hypothesis that the spatial median of ${\bf W}_{1}$ (respectively, ${\bf W}_{1} + {\bf W}_{2}$) is zero. Suppose that ${\bf Y}_{1} - {\bf X}_{1}$ has a symmetric distribution about some $\boldsymbol{\eta} \in {\cal X}$, i.e., the distribution of ${\bf Y}_{1} - {\bf X}_{1} - \boldsymbol{\eta}$ and $\boldsymbol{\eta} - {\bf Y}_{1} + {\bf X}_{1}$ are the same. Then, it follows that both of $H_{0}^{(1)}$ and $H_{0}^{(2)}$ become the hypothesis $\boldsymbol{\eta} = {\bf 0}$. This holds, in particular, if ${\bf X}_{1}$ and ${\bf Y}_{1}$ are exchangeable, i.e., the distributions of $({\bf X}_{1},{\bf Y}_{1})$ and $({\bf Y}_{1},{\bf X}_{1})$ are the same. Further, if the distribution of ${\bf Y}_{1} - {\bf X}_{1}$ is symmetric and its mean exists, then both of $H_{0}^{(1)}$ and $H_{0}^{(2)}$ are equivalent to the hypothesis $E({\bf Y}_{1} - {\bf X}_{1}) = {\bf 0}$.

\subsection{Asymptotic distributions and the implementation of the tests}
\label{2.1}
\indent In order to study the asymptotic distributions of ${\bf T}_{S}$ and ${\bf T}_{SR}$, we introduce the following definitions.
\begin{definition}
A Banach space ${\cal X}$ is said to be of type $2$ if there exists a constant $b > 0$ such that for any $m \geq 1$ and independent zero mean random elements ${\bf U}_{1},{\bf U}_{2},\ldots,{\bf U}_{m}$ in ${\cal X}$ with $E(||{\bf U}_{i}||^{2}) < \infty$, $1 \leq i \leq m$, we have $E(||\sum_{i=1}^{m} {\bf U}_{i}||^{2}) \leq b\sum_{i=1}^{m} E(||{\bf U}_{i}||^{2})$.
\end{definition}
\begin{definition}
A Banach space ${\cal X}$ is said to be $p$-uniformly smooth for some $p \in (1,2]$ if for every $q \geq 1$ there exists a constant $\alpha_{q} > 0$ such that for any zero mean martingale sequence $({\bf M}_{m},{\cal G}_{m})_{m \geq 1}$ in  ${\cal X}$, we have $E(||{\bf M}_{m}||^{q}) \leq \alpha_{q}\sum_{i=1}^{m} E(||{\bf M}_{i} - {\bf M}_{i-1}||^{p})^{q/p}$. Here, the sequence $({\bf M}_{m})_{m \geq 1}$ is adapted to the filtration $({\cal G}_{m})_{m \geq 1}$.
\end{definition}
Any $2$-uniformly smooth Banach space is of type $2$. Hilbert spaces are $2$-uniformly smooth, and $L_{p}$ spaces are $\widetilde{p}$-uniformly smooth, where $\widetilde{p} = \min(p,2)$ for $p \in (1,\infty)$.
\begin{definition}
A continuous linear operator ${\bf C} : {\cal X}^{*} \rightarrow {\cal X}$ is said to be symmetric if ${\bf y}\{{\bf C}({\bf x})\} = {\bf x}\{{\bf C}({\bf y})\}$ for all ${\bf x} ,{\bf y} \in {\cal X}^{*}$. It is said to be positive if ${\bf x}\{{\bf C}({\bf x})\} > 0$ for all ${\bf x} \in {\cal X}^{*}$.
\end{definition}
\begin{definition}
A random element ${\bf Z}$ in a separable Banach space ${\cal X}$ is said to have a Gaussian distribution with mean ${\bf m} \in {\cal X}$ and covariance ${\bf C}$, which we denote by $G({\bf m},{\bf C})$, if for any ${\bf u} \in {\cal X}^{*}$, ${\bf u}({\bf Z})$ has a Gaussian distribution on $\mathbb{R}$ with mean ${\bf u}({\bf m})$ and variance ${\bf u}\{{\bf C}({\bf u})\}$. Here, ${\bf C} : {\cal X}^{*} \rightarrow {\cal X}$ is a symmetric positive continuous linear operator.
\end{definition}
We refer to Sect. $7$ of Chapter $3$ in \citet{AG80}, \citet{Boro91} and Sect. 2.4 in Chapter IV of \citet{VTC87} for further details. Define ${\boldsymbol{\Pi}}_{1} : {\cal X}^{**} \rightarrow {\cal X}^{*}$ and ${\boldsymbol{\Pi}}_{2} : {\cal X}^{**} \rightarrow {\cal X}^{*}$ as
\begin{eqnarray*}
&& {\boldsymbol{\Pi}}_{1}({\bf f}) = E[{\bf f}({\bf S}_{{\bf W}_{1}}){\bf S}_{{\bf W}_{1}}] - \{{\bf f}(\boldsymbol{\nu})\}\boldsymbol{\nu}, \ \ \ \mbox{and} \\
&& {\boldsymbol{\Pi}}_{2}({\bf f}) = 4(E[{\bf f}\{E({\bf S}_{{\bf W}_{1} + {\bf W}_{2}}|{\bf W}_{1})\}E({\bf S}_{{\bf W}_{1} + {\bf W}_{2}}|{\bf W}_{1})] - \{{\bf f}(\boldsymbol{\theta})\}\boldsymbol{\theta}),
\end{eqnarray*}
where ${\bf f} \in {\cal X}^{**}$. So, ${\boldsymbol{\Pi}}_{1}$ and ${\boldsymbol{\Pi}}_{2}$ are symmetric positive continuous linear operators. For Banach space valued random elements ${\bf U}$ and ${\bf V}$ defined on the same probability space with ${\bf U}$ having finite Bochner expectation, the conditional expectation of ${\bf U}$ given  ${\bf V}$ exists and can be properly defined (see \citet[pp. $125-128$]{VTC87}). Let $({\bf X}_{i},{\bf Y}_{i})$, $1 \leq i \leq n$, be i.i.d. paired observations with the ${\bf X}_{i}$'s and the ${\bf Y}_{i}$'s taking values in a smooth Banach space ${\cal X}$. The next theorem gives the asymptotic distributions of ${\bf T}_{S}$ and ${\bf T}_{SR}$.
\begin{theorem} \label{thm1}
Suppose that the dual space ${\cal X}^{*}$ is a separable and type $2$ Banach space. Then, for any probability measure $P$ on ${\cal X}$, $n^{1/2}({\bf T}_{S} - \boldsymbol{\nu})$ converges {\it weakly} to a Gaussian limit $G({\bf 0},{\boldsymbol{\Pi}}_{1})$ as $n \rightarrow \infty$. Further, if ${\cal X}^{*}$ is $p$-uniformly smooth for some $p \in (4/3,2]$, we have weak convergence of $n^{1/2}({\bf T}_{SR} - \boldsymbol{\theta})$ to a Gaussian limit $G({\bf 0},{\boldsymbol{\Pi}}_{2})$ as $n \rightarrow \infty$.
\end{theorem}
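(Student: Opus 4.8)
The plan is to treat $\mathbf T_S$ as a sample mean and $\mathbf T_{SR}$ as a U-statistic of degree two, both taking values in the separable Banach space $\mathcal X^*$ (note that $\mathbf S_{\mathbf x}\in\mathcal X^*$ with $\|\mathbf S_{\mathbf x}\|\le 1$ for every $\mathbf x$), and to reduce the U-statistic to a sample mean via a H\'ajek-type projection.

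For $\mathbf T_S$, the summands $\mathbf S_{\mathbf W_1},\dots,\mathbf S_{\mathbf W_n}$ are i.i.d.\ $\mathcal X^*$-valued random elements (measurability of $\mathbf x\mapsto\mathbf S_{\mathbf x}$ into $\mathcal X^*$ follows from smoothness of $\mathcal X$ and separability of $\mathcal X^*$), bounded by $1$ in norm, with mean $\boldsymbol\nu$; hence $\mathbf S_{\mathbf W_1}-\boldsymbol\nu$ has mean $\mathbf 0$ and finite second moment. Since $\mathcal X^*$ is separable and of type $2$, the central limit theorem for i.i.d.\ square-integrable random elements in such spaces (see \citet{AG80} or \citet{VTC87}) yields $n^{1/2}(\mathbf T_S-\boldsymbol\nu)=n^{-1/2}\sum_{i=1}^n(\mathbf S_{\mathbf W_i}-\boldsymbol\nu)$ converging weakly to $G(\mathbf 0,\mathbf C)$, where $\mathbf C(\mathbf f)=E[\mathbf f(\mathbf S_{\mathbf W_1}-\boldsymbol\nu)(\mathbf S_{\mathbf W_1}-\boldsymbol\nu)]$ for $\mathbf f\in\mathcal X^{**}$. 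Expanding this product and using $E(\mathbf S_{\mathbf W_1})=\boldsymbol\nu$ reduces $\mathbf C$ to $\mathbf f\mapsto E[\mathbf f(\mathbf S_{\mathbf W_1})\mathbf S_{\mathbf W_1}]-\mathbf f(\boldsymbol\nu)\boldsymbol\nu=\boldsymbol\Pi_1(\mathbf f)$.

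For $\mathbf T_{SR}$, I would use the Hoeffding decomposition. Write $h(\mathbf w_1,\mathbf w_2)=\mathbf S_{\mathbf w_1+\mathbf w_2}$, a bounded symmetric kernel with $E[h(\mathbf W_1,\mathbf W_2)]=\boldsymbol\theta$; set $h_1(\mathbf w)=E[h(\mathbf w,\mathbf W_2)]=E(\mathbf S_{\mathbf W_1+\mathbf W_2}\mid\mathbf W_1=\mathbf w)$ and $g(\mathbf w_1,\mathbf w_2)=h(\mathbf w_1,\mathbf w_2)-h_1(\mathbf w_1)-h_1(\mathbf w_2)+\boldsymbol\theta$, which is bounded and completely degenerate, i.e.\ $E[g(\mathbf w,\mathbf W_2)]=\mathbf 0$ for a.e.\ $\mathbf w$. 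Then
\begin{equation*}
\mathbf T_{SR}-\boldsymbol\theta=n^{-1}\sum_{i=1}^n 2\bigl(h_1(\mathbf W_i)-\boldsymbol\theta\bigr)+\frac{2}{n(n-1)}\sum_{1\le i<j\le n}g(\mathbf W_i,\mathbf W_j).
\end{equation*}
The first term has i.i.d., bounded, mean-zero summands, so the same type $2$ CLT gives weak convergence of $n^{1/2}$ times it to $G(\mathbf 0,\boldsymbol\Pi_2)$, the factor $4$ in $\boldsymbol\Pi_2$ arising from squaring the coefficient $2$. By Slutsky's theorem it then suffices to show that $n^{1/2}$ times the second (degenerate) term converges to $\mathbf 0$ in probability.

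That last step is the main obstacle, and is where $p$-uniform smoothness of $\mathcal X^*$ is needed. Put $V_n=\sum_{i<j}g(\mathbf W_i,\mathbf W_j)$ and $M_m=\sum_{j=2}^m\sum_{i<j}g(\mathbf W_i,\mathbf W_j)$. Because $g$ is completely degenerate and $\mathbf W_m$ is independent of $(\mathbf W_1,\dots,\mathbf W_{m-1})$, the sequence $(M_m,\sigma(\mathbf W_1,\dots,\mathbf W_m))_{m\ge 2}$ is a zero-mean $\mathcal X^*$-valued martingale with increments $\boldsymbol\Delta_m=\sum_{i<m}g(\mathbf W_i,\mathbf W_m)$. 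Applying the martingale moment inequality defining $p$-uniform smoothness with exponent $q=p$ gives $E\|V_n\|^p\le\alpha_p\sum_{m=2}^nE\|\boldsymbol\Delta_m\|^p$; applying it again conditionally on $\mathbf W_m$, given which $\boldsymbol\Delta_m$ is a sum of $m-1$ independent mean-zero terms, together with $\|g\|\le 4$, gives $E\|\boldsymbol\Delta_m\|^p\le\alpha_p(m-1)4^p$. Hence $E\|V_n\|^p=O(n^2)$, so $\|V_n\|=O_P(n^{2/p})$ and $n^{1/2}\cdot\tfrac{2}{n(n-1)}\|V_n\|=O_P(n^{2/p-3/2})=o_P(1)$ precisely because $p>4/3$; this accounts for the stated range of $p$. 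The remaining points are routine: the Gaussian limits are genuine probability measures on $\mathcal X^*$ since the CLT produces them; the needed facts on Bochner conditional expectations are in \citet{VTC87}, and general background on U-statistics in Banach spaces is in \citet{Boro91}.
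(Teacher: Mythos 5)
Your proof is correct, and its overall architecture is the same as the paper's: the type~$2$ CLT of \citet[Thm.~7.5(i)]{AG80} for $\mathbf{T}_S$ and for the H\'ajek projection of $\mathbf{T}_{SR}$, plus the Hoeffding decomposition with a moment bound of order $n^{2-3p/2}$ on the rescaled degenerate remainder, which is $o_P(1)$ exactly when $p>4/3$. The one place you genuinely diverge is the remainder bound: the paper obtains $E\|n^{1/2}\mathbf{R}_n\|^{q}\leq\widetilde{\alpha}_q n^{2-3q/2}$ by citing Thm.~5.1 of \citet{Boro91} as a black box, whereas you derive it from first principles by a double application of the $p$-uniform smoothness martingale inequality --- first to the forward martingale $M_m=\sum_{j\le m}\sum_{i<j}g(\mathbf{W}_i,\mathbf{W}_j)$ with increments $\boldsymbol{\Delta}_m=\sum_{i<m}g(\mathbf{W}_i,\mathbf{W}_m)$ (a martingale precisely because $g$ is completely degenerate), and then conditionally on $\mathbf{W}_m$ to the inner sum of independent centred terms, giving $E\|\boldsymbol{\Delta}_m\|^p\le\alpha_p 4^p(m-1)$ and hence $E\|V_n\|^p=O(n^2)$. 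This self-contained derivation is essentially the proof of the cited Borovskikh inequality specialized to a bounded degenerate kernel; what it buys is transparency about where the threshold $p>4/3$ comes from (it is exactly the exponent at which $2/p-3/2$ turns negative), at the cost of working only with the single moment $q=p$ rather than the full range $q\in[1,p]$ stated in the paper --- which is all that is needed here, since convergence in $p$-th mean already implies convergence in probability. All remaining steps (measurability of $\mathbf{x}\mapsto\mathbf{S}_{\mathbf{x}}$, identification of the covariances $\boldsymbol{\Pi}_1$ and $\boldsymbol{\Pi}_2$ including the factor $4$) are handled correctly.
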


\begin{proof}
Using the central limit theorem for i.i.d. random elements in a separable and type 2 Banach space (see \citet[Thm. $7.5$(i)]{AG80}), we get that $n^{1/2}({\bf T}_{S} - \boldsymbol{\nu})$ converges {\it weakly} to $G({\bf 0},{\boldsymbol{\Pi}}_{1})$. \\
\indent Note that ${\bf T}_{SR} - \boldsymbol{\theta}$ is a Banach space valued $U$-statistic with kernel ${\bf h}({\bf w}_{i},{\bf w}_{j}) = {\bf S}_{{\bf w}_{i} + {\bf w}_{j}} - \boldsymbol{\theta}$, which satisfies $E\{{\bf h}({\bf W}_{i},{\bf W}_{j})\} = {\bf 0}$. By the Hoeffding type decomposition for Banach space valued $U$-statistics (see \citet[p. 430]{Boro91}), we have
\begin{eqnarray*}
 {\bf T}_{SR} - \boldsymbol{\theta} = \frac{2}{n} \sum_{i=1}^{n} [E\{{\bf S}_{{\bf W}_{i} + {\bf W}'}\mid{\bf W}_{i}\} - \boldsymbol{\theta}] + {\bf R}_{n},
\end{eqnarray*}
where ${\bf W}'$ is an independent copy of ${\bf W}_{1}$. So, ${\bf R}_{n} = 2[n(n-1)]^{-1} \sum_{1 \leq i < j \leq n} \widetilde{{\bf h}}({\bf W}_{i},{\bf W}_{j})$, where $\widetilde{{\bf h}}({\bf w}_{i},{\bf w}_{j}) = {\bf h}({\bf w}_{i},{\bf w}_{j}) - E\{{\bf h}({\bf W}_{i},{\bf W}_{j})\mid{\bf W}_{i}={\bf w}_{i}\} - E\{{\bf h}({\bf W}_{i},{\bf W}_{j})\mid{\bf W}_{j}={\bf w}_{j}\}$. Note that $||\widetilde{{\bf h}}({\bf w}_{i},{\bf w}_{j})|| \leq 4$. Using the boundedness of $\widetilde{{\bf h}}(\cdot,\cdot)$ and Thm. 5.1 in \citet{Boro91}, it follows that for any $q \in [1,p]$,
\begin{equation}
E(||n^{1/2}{\bf R}_{n}||^{q}) \leq \widetilde{\alpha}_{q}n^{2-(3q/2)}  \label{thm1-eq1}
\end{equation}
for every $n \geq 2$ and a constant $\widetilde{\alpha}_{q}$. Thus, if $p > 4/3$, $E(||n^{1/2}{\bf R}_{n}||^{q})$ converges to zero as $n \rightarrow \infty$ for any $q \in (4/3,p]$. This implies that $n^{1/2}{\bf R}_{n}$ converges to zero {\it in probability} as $n \rightarrow \infty$. \\
\indent Now, $n^{-1/2} \sum_{i=1}^{n} E\{{\bf h}({\bf W}_{i},{\bf W}')\mid{\bf W}_{i}\}$ converge {\it weakly} to $G({\bf 0},{\boldsymbol{\Pi}}_{2})$ as $n \rightarrow \infty$ by the central limit theorem for i.i.d. random elements  in a separable type $2$ Banach space (see \citet[Thm. $7.5$(i)]{AG80}). This, together with the fact that $n^{1/2}{\bf R}_{n}$ converges to zero {\it in probability}, completes the proof.
\end{proof}

\indent Let $c_{1\alpha}$ and $c_{2\alpha}$ denote the $(1-\alpha)$ quantiles of the distributions of $||G({\bf 0},{\boldsymbol{\Pi}}_{1})||$ and $||G({\bf 0},{\boldsymbol{\Pi}}_{2})||$, respectively. The test based on ${\bf T}_{S}$ rejects $H_{0}^{(1)}$ when $||n^{1/2}{\bf T}_{S}|| > c_{1\alpha}$ and the test based on ${\bf T}_{SR}$ rejects $H_{0}^{(2)}$ when $||n^{1/2}{\bf T}_{SR}|| > c_{2\alpha}$.
\begin{corollary}
The asymptotic sizes of these tests based on ${\bf T}_{S}$ and ${\bf T}_{SR}$ will be the same as their nominal level. Further, these tests are consistent whenever $\boldsymbol{\nu} \neq {\bf 0}$ and $\boldsymbol{\theta} \neq {\bf 0}$, respectively. So, if the distribution of ${\bf Y}_{1} - {\bf X}_{1}$ is symmetric and its spatial median is non-zero, then these tests are consistent. In particular, these tests are consistent for location shift alternatives.
\end{corollary}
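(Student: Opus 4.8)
The proof will deduce all the assertions from Theorem~\ref{thm1} by composing its weak-convergence conclusions with the continuous mapping theorem for the (continuous) norm map ${\bf z} \mapsto ||{\bf z}||$ on ${\cal X}$, keeping the hypotheses of Theorem~\ref{thm1} in force throughout. For the statement on asymptotic size, observe that under $H_{0}^{(1)}$ we have $\boldsymbol{\nu} = {\bf 0}$, so Theorem~\ref{thm1} gives that $n^{1/2}{\bf T}_{S} = n^{1/2}({\bf T}_{S} - \boldsymbol{\nu})$ converges weakly to $G({\bf 0},{\boldsymbol{\Pi}}_{1})$, whence $||n^{1/2}{\bf T}_{S}||$ converges weakly to $||G({\bf 0},{\boldsymbol{\Pi}}_{1})||$. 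Since the test rejects exactly when $||n^{1/2}{\bf T}_{S}|| > c_{1\alpha}$ and $c_{1\alpha}$ is the $(1-\alpha)$ quantile of the limit, the rejection probability converges to $\alpha$ provided the distribution function of $||G({\bf 0},{\boldsymbol{\Pi}}_{1})||$ is continuous at $c_{1\alpha}$. I would get this from the classical fact that the norm of a centered Gaussian random element in a separable Banach space has an atomless distribution on $(0,\infty)$ unless the element is degenerate at ${\bf 0}$ (see the references on Gaussian measures in Banach spaces cited in Section~\ref{2}); and under $H_{0}^{(1)}$ the degenerate case ${\boldsymbol{\Pi}}_{1} = {\bf 0}$ amounts to ${\bf W}_{1} = {\bf 0}$ almost surely, which may be excluded. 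The same argument with ${\boldsymbol{\Pi}}_{2}$ and $c_{2\alpha}$ handles the test based on ${\bf T}_{SR}$.

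For the consistency claims, fix any $P$ with $\boldsymbol{\nu} \neq {\bf 0}$. Theorem~\ref{thm1} holds for every $P$, so $n^{1/2}({\bf T}_{S} - \boldsymbol{\nu})$ is weakly convergent and hence tight, which forces ${\bf T}_{S} - \boldsymbol{\nu} = n^{-1/2}\{ n^{1/2}({\bf T}_{S} - \boldsymbol{\nu}) \} \to {\bf 0}$ in probability and therefore $||{\bf T}_{S}|| \to ||\boldsymbol{\nu}||$ in probability. As $||\boldsymbol{\nu}|| > 0$, the event $\{ ||{\bf T}_{S}|| > ||\boldsymbol{\nu}||/2 \}$ has probability tending to $1$, and on it $||n^{1/2}{\bf T}_{S}|| > n^{1/2}||\boldsymbol{\nu}||/2 \to \infty$, so the rejection probability $P(||n^{1/2}{\bf T}_{S}|| > c_{1\alpha})$ tends to $1$ for the fixed critical value $c_{1\alpha}$. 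Running the same argument with $\boldsymbol{\theta}$, ${\bf T}_{SR}$ and $c_{2\alpha}$ gives consistency of the signed-rank test whenever $\boldsymbol{\theta} \neq {\bf 0}$.

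It then remains to see that the symmetry hypothesis produces nonzero $\boldsymbol{\nu}$ and $\boldsymbol{\theta}$. Here I would use the equivalences already recorded in Section~\ref{2}: $\boldsymbol{\nu} = {\bf 0}$ (respectively $\boldsymbol{\theta} = {\bf 0}$) if and only if the spatial median of ${\bf W}_{1}$ (respectively of ${\bf W}_{1} + {\bf W}_{2}$) is ${\bf 0}$. If the distribution of ${\bf W}_{1} = {\bf Y}_{1} - {\bf X}_{1}$ is symmetric about $\boldsymbol{\eta}$, then $\boldsymbol{\eta}$ is its spatial median, and by independence ${\bf W}_{1} + {\bf W}_{2}$ is symmetric about $2\boldsymbol{\eta}$, which is therefore its spatial median; so if this spatial median $\boldsymbol{\eta}$ is nonzero, both $\boldsymbol{\nu}$ and $\boldsymbol{\theta}$ are nonzero and the tests are consistent by the preceding paragraph. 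A location shift alternative is precisely the situation in which ${\bf W}_{1}$ is symmetric about a nonzero shift, so it is covered as a special case.

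The step I expect to be the main obstacle is showing that $c_{1\alpha}$ and $c_{2\alpha}$ are continuity points of the distribution functions of $||G({\bf 0},{\boldsymbol{\Pi}}_{1})||$ and $||G({\bf 0},{\boldsymbol{\Pi}}_{2})||$, which is the only place requiring input beyond Theorem~\ref{thm1} and routine manipulation of weak convergence and tightness; it rests on the non-atomicity of the law of the norm of a (nondegenerate) Gaussian element of a separable Banach space, together with the harmless exclusion of the degenerate cases ${\boldsymbol{\Pi}}_{1} = {\bf 0}$ and ${\boldsymbol{\Pi}}_{2} = {\bf 0}$.
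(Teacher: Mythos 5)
Your argument is correct and is essentially the one the paper intends: the corollary is stated without proof as an immediate consequence of Theorem~\ref{thm1}, and your route (continuous mapping for the norm under the null, divergence of $n^{1/2}\|{\bf T}_{S}\|$ and $n^{1/2}\|{\bf T}_{SR}\|$ when $\boldsymbol{\nu}\neq{\bf 0}$ or $\boldsymbol{\theta}\neq{\bf 0}$, and the symmetry/spatial-median equivalences already recorded in Section~\ref{2}) is the standard filling-in of that omitted argument. Your extra care about $c_{1\alpha}$, $c_{2\alpha}$ being continuity points of the limiting norm distributions is a legitimate refinement the paper glosses over, and it is correctly resolved by the positivity of ${\boldsymbol{\Pi}}_{1}$ and ${\boldsymbol{\Pi}}_{2}$ assumed in the paper's definition of the Gaussian limit.
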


\indent We next describe how to compute the critical value of the tests based on ${\bf T}_{S}$ and ${\bf T}_{SR}$ using their asymptotic distributions under the null hypotheses obtained in Thm. \ref{thm1}. Suppose that ${\cal X}$ is a separable Hilbert space and ${\bf Z}$ is a zero mean Gaussian random element in ${\cal X}$ with covariance operator ${\bf C}$. Then, it can be shown using the spectral decomposition of the compact self-adjoint operator ${\bf C}$ (see Thm. IV$.2.4$, and Thm. $1.3$ and Cor. $2$ in pp. 159-160 in \citet{VTC87}) that the distribution of $||{\bf Z}||^{2}$ is a weighted sums of independent chi-square variables each with one degree of freedom, where the weights are the eigenvalues of the ${\bf C}$. Thus, if ${\cal X}$ is a separable Hilbert space, the asymptotic distributions of $||n^{1/2}{\bf T}_{S}||^{2}$ and $||n^{1/2}{\bf T}_{SR}||^{2}$ are weighted sums of independent chi-square variables each with one degree of freedom, where the weights are the eigenvalues of ${\boldsymbol{\Pi}}_{1}$ and ${\boldsymbol{\Pi}}_{2}$, respectively. The eigenvalues of ${\boldsymbol{\Pi}}_{1}$ and ${\boldsymbol{\Pi}}_{2}$ can be estimated by the eigenvalues of $\widehat{\Pi}_{1}$ and $\widehat{\Pi}_{2}$, which are defined as
\begin{eqnarray*}
\widehat{{\boldsymbol{\Pi}}_{1}} &=& \frac{1}{n-1}\left\{\sum_{i=1}^{n} \left(\frac{{\bf W}_{i}}{||{\bf W}_{i}||} - \widehat{\boldsymbol{\nu}} \right) \otimes \left(\frac{{\bf W}_{i}}{||{\bf W}_{i}||} - \widehat{\boldsymbol{\nu}} \right) \right\}, \\
\widehat{{\boldsymbol{\Pi}}_{2}} &=& \frac{4}{n-1}\left\{\sum_{i=1}^{n} \left(\frac{1}{n-1} \sum_{\substack{j=1 \\ j \neq i}}^{n} \frac{{\bf W}_{i} + {\bf W}_{j}}{||{\bf W}_{i} + {\bf W}_{j}||} - \widehat{\boldsymbol{\theta}}\right) \otimes \left(\frac{1}{n-1} \sum_{\substack{j=1 \\ j \neq i}}^{n} \frac{{\bf W}_{i} + {\bf W}_{j}}{||{\bf W}_{i} + {\bf W}_{j}||} - \widehat{\boldsymbol{\theta}}\right) \right\}.
\end{eqnarray*}
Here, ${\bf x} \otimes {\bf x} : {\cal X} \rightarrow {\cal X}$ is the tensor product in the Hilbert space ${\cal X}$, which is defined as $\langle({\bf x} \otimes {\bf x})({\bf f}),{\bf g}\rangle = \langle {\bf x},{\bf f}\rangle\langle {\bf x},{\bf g}\rangle$ for ${\bf f}, {\bf g}, {\bf x} \in {\cal X}$. Further, $\widehat{\boldsymbol{\nu}} = n^{-1} \sum_{i=1}^{n} {\bf W}_{i}/||{\bf W}_{i}||$ and $\widehat{\boldsymbol{\theta}} = 2\{n(n-1)\}^{-1} \sum_{i=1}^{n-1} \sum_{j=i+1}^{n} ({\bf W}_{i} + {\bf W}_{j})/||{\bf W}_{i} + {\bf W}_{j}||$. The critical values $c_{1\alpha}$ and $c_{2\alpha}$ can be obtained by simulating from the estimated asymptotic distributions of ${\bf T}_{S}$ and ${\bf T}_{SR}$. On the other hand, if ${\cal X}$ is a general Banach space satisfying the assumptions of Thm. \ref{thm1}, we no longer have the weighted chi-square representations for the asymptotic distributions of ${\bf T}_{S}$ and ${\bf T}_{SR}$ under the null hypotheses. However, we can estimate ${\boldsymbol{\Pi}}_{1}$ and ${\boldsymbol{\Pi}}_{2}$ by their empirical counterparts, which are defined in a similar way as the definitions above. We can simulate from the asymptotic Gaussian distributions with the estimated covariance operators to compute the critical values of the tests.

\section{Asymptotic distributions under shrinking alternatives}
\label{3}
\indent Consider i.i.d. paired observations $({\bf X}_{i},{\bf Y}_{i})$, $1 \leq i \leq n$, where the ${\bf X}_{i}$'s and the ${\bf Y}_{i}$'s take values in a smooth Banach space ${\cal X}$. In this section, we shall derive the asymptotic distribution of ${\bf T}_{S}$ and ${\bf T}_{SR}$ under sequences of shrinking location shift alternatives, where ${\bf W}_{i} = {\bf Y}_{i} - {\bf X}_{i}$ is symmetrically distributed about $n^{-1/2}\boldsymbol{\eta}$ for some fixed nonzero $\boldsymbol{\eta} \in {\cal X}$ and $1 \leq i \leq n$. For some of the finite dimensional multivariate extensions of the sign and the signed-rank tests, such alternative hypotheses have been shown to be contiguous to the null and yields nondegenerate asymptotic distributions of the test statistics (see, e.g., \citet{Rand89,MOT97,Oja99}).
\begin{definition}
The norm in ${\cal X}$ is said to be twice G\^ateaux differentiable at ${\bf x} \neq {\bf 0}$ with Hessian (or second order G\^ateaux derivative) ${\bf H}_{{\bf x}}$, which is a continuous linear map from ${\cal X}$ to ${\cal X}^{*}$, if
$\lim_{t \rightarrow 0} \ t^{-1}({\bf S}_{{\bf x} + t{\bf h}} - {\bf S}_{{\bf x}}) = {\bf H}_{{\bf x}}({\bf h})$
for all ${\bf h} \in {\cal X}$. Here, the limit is assumed to exist in the norm topology of ${\cal X}^{*}$.
\end{definition}
Norms in Hilbert spaces and $L_{p}$ spaces for $p \in [2,\infty)$ are twice G\^ateaux differentiable. We refer to Chapters $4$ and $5$ in \citet{BV10} for further details. For the next theorem, let us assume that the norm in ${\cal X}$ is twice G\^ateaux differentiable at every ${\bf x} \neq {\bf 0}$, and denote the Hessians of the functions ${\bf x} \mapsto E\{||{\bf x} - {\bf W}_{1}|| - ||{\bf W}_{1}||\}$ and ${\bf x} \mapsto E\{||2{\bf x} - {\bf W}_{1} - {\bf W}_{2}|| - ||{\bf W}_{1} + {\bf W}_{2}||\}$ by ${\bf J}^{(1)}_{{\bf x}}$ and ${\bf J}^{(2)}_{{\bf x}}$, respectively. The following theorem gives the asymptotic distributions of ${\bf T}_{S}$ and ${\bf T}_{SR}$ under the sequence of shrinking alternatives mentioned earlier.
\begin{theorem}  \label{thm2}
Suppose that ${\cal X}^{*}$ is a separable and type $2$ Banach space. Assume that the distribution of ${\bf W}_{1}$ is nonatomic, and ${\bf J}^{(1)}_{{\bf 0}}$ exists. Then, $n^{1/2}{\bf T}_{S}$ converges {\it weakly} to a Gaussian limit $G\{{\bf J}^{(1)}_{{\bf 0}}(\boldsymbol{\eta}),{\boldsymbol{\Pi}}_{1}\}$ as $n \rightarrow \infty$. Further, if ${\cal X}^{*}$ is a $p$-uniformly smooth Banach space for some $p \in (4/3,2]$ and ${\bf J}^{(2)}_{{\bf 0}}$ exists, we have weak convergence of $n^{1/2}{\bf T}_{SR}$ to a Gaussian limit $G\{{\bf J}^{(1)}_{{\bf 0}}(\boldsymbol{\eta}),{\boldsymbol{\Pi}}_{2}\}$. Here, the expectations in the definitions of ${\bf J}^{(1)}_{{\bf 0}}$ and ${\bf J}^{(2)}_{{\bf 0}}$ are taken with respect to the symmetric distribution of ${\bf W}_{1}$ about zero under the null hypothesis.
\end{theorem}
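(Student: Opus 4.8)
The plan is to write each of $n^{1/2}{\bf T}_{S}$ and $n^{1/2}{\bf T}_{SR}$ as a sum of three pieces --- a statistic whose weak limit is the null Gaussian law of Theorem~\ref{thm1}, a deterministic drift, and a remainder that is $o_{P}(1)$ --- and then conclude by Slutsky's lemma. Following the convention in the statement, let ${\bf V},{\bf V}_{1},\ldots,{\bf V}_{n},{\bf V}'$ be i.i.d.\ with the fixed law that is symmetric about ${\bf 0}$ (the ``null law'' of the ${\bf W}_{i}$'s), so that under the shrinking alternative ${\bf W}_{i}\stackrel{d}{=}{\bf V}_{i}+n^{-1/2}\boldsymbol{\eta}$. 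Since the ${\bf V}_{i}$'s do not depend on $n$, the null central limit pieces below are direct invocations of \citet[Thm.~$7.5$(i)]{AG80}, exactly as in the proof of Theorem~\ref{thm1}; only the drift and the remainder involve the triangular array. Throughout I use that $\|{\bf S}_{{\bf x}}\|\leq 1$, that ${\bf S}_{{\bf x}}$ is continuous in ${\bf x}\neq{\bf 0}$, and that nonatomicity keeps ${\bf V}$ and ${\bf V}+{\bf V}'$ off the origin almost surely.

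For ${\bf T}_{S}$, decompose $n^{1/2}{\bf T}_{S} = n^{-1/2}\sum_{i}{\bf S}_{{\bf V}_{i}} + n^{1/2}\boldsymbol{\nu}_{n} + n^{-1/2}\sum_{i}\{{\bf D}_{i}-E({\bf D}_{1})\}$ with ${\bf D}_{i} = {\bf S}_{{\bf V}_{i}+n^{-1/2}\boldsymbol{\eta}}-{\bf S}_{{\bf V}_{i}}$ and $\boldsymbol{\nu}_{n}=E({\bf S}_{{\bf V}_{1}+n^{-1/2}\boldsymbol{\eta}})=E({\bf D}_{1})$. The first sum converges weakly to $G({\bf 0},\boldsymbol{\Pi}_{1})$ by the type $2$ CLT, where $\boldsymbol{\Pi}_{1}$ is the covariance of ${\bf S}_{{\bf V}_{1}}$, i.e.\ the null $\boldsymbol{\Pi}_{1}$. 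For the drift, ${\bf x}\mapsto E\{\|{\bf x}-{\bf V}\|-\|{\bf V}\|\}$ has first G\^ateaux derivative ${\bf x}\mapsto E({\bf S}_{{\bf x}-{\bf V}})$ --- dominated convergence applies because the difference quotients are bounded by $\|{\bf h}\|$ and converge a.s.\ (as ${\bf V}\neq{\bf 0}$ a.s.) --- this derivative vanishes at ${\bf 0}$ by symmetry, and by hypothesis its own G\^ateaux derivative at ${\bf 0}$ is ${\bf J}^{(1)}_{{\bf 0}}$; taking increment $n^{-1/2}\boldsymbol{\eta}$ and using symmetry ($E({\bf S}_{n^{-1/2}\boldsymbol{\eta}-{\bf V}})=E({\bf S}_{n^{-1/2}\boldsymbol{\eta}+{\bf V}})=\boldsymbol{\nu}_{n}$) yields $n^{1/2}\boldsymbol{\nu}_{n}\to{\bf J}^{(1)}_{{\bf 0}}(\boldsymbol{\eta})$. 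Finally ${\bf D}_{1}\to{\bf 0}$ a.s.\ and $\|{\bf D}_{1}\|\leq 2$, so $E\|{\bf D}_{1}\|^{2}\to 0$ and the third sum is $o_{P}(1)$ by the type $2$ inequality. Hence $n^{1/2}{\bf T}_{S}$ converges weakly to $G\{{\bf J}^{(1)}_{{\bf 0}}(\boldsymbol{\eta}),\boldsymbol{\Pi}_{1}\}$.

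For ${\bf T}_{SR}$, set $\boldsymbol{\theta}_{n}=E({\bf S}_{{\bf W}_{1}+{\bf W}_{2}})$ and use the Hoeffding decomposition for Banach space valued $U$-statistics (as in Theorem~\ref{thm1}) around $\boldsymbol{\theta}_{n}$: $n^{1/2}{\bf T}_{SR}=n^{1/2}\boldsymbol{\theta}_{n}+2n^{-1/2}\sum_{i}\{\boldsymbol{\varphi}_{n}({\bf V}_{i})-\boldsymbol{\theta}_{n}\}+n^{1/2}{\bf R}_{n}$, where $\boldsymbol{\varphi}_{n}({\bf V}_{i})=E({\bf S}_{{\bf V}_{i}+{\bf V}'+2n^{-1/2}\boldsymbol{\eta}}\mid{\bf V}_{i})$. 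The degenerate kernel is bounded by $4$ uniformly in $n$, so the estimate \eqref{thm1-eq1} of Theorem~\ref{thm1} holds with the same constant and $n^{1/2}{\bf R}_{n}\to{\bf 0}$ in probability since $p>4/3$. With $\boldsymbol{\varphi}_{0}({\bf V}_{i})=E({\bf S}_{{\bf V}_{i}+{\bf V}'}\mid{\bf V}_{i})$ the null projection, $2n^{-1/2}\sum_{i}\boldsymbol{\varphi}_{0}({\bf V}_{i})$ converges weakly to $G({\bf 0},\boldsymbol{\Pi}_{2})$ by the type $2$ CLT ($\boldsymbol{\Pi}_{2}$ being the covariance of $2\boldsymbol{\varphi}_{0}({\bf V}_{1})$, i.e.\ the null $\boldsymbol{\Pi}_{2}$), while $2n^{-1/2}\sum_{i}\{\boldsymbol{\varphi}_{n}({\bf V}_{i})-\boldsymbol{\varphi}_{0}({\bf V}_{i})-\boldsymbol{\theta}_{n}\}$ has mean zero and, by the type $2$ inequality with $\|\boldsymbol{\theta}_{n}\|\to 0$ and $E\|\boldsymbol{\varphi}_{n}({\bf V}_{1})-\boldsymbol{\varphi}_{0}({\bf V}_{1})\|^{2}\to 0$ (dominated convergence, as before, since ${\bf V}_{1}+{\bf V}'\neq{\bf 0}$ a.s.), is $o_{P}(1)$. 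For the drift, ${\bf V}_{1}+{\bf V}_{2}$ is symmetric about ${\bf 0}$, so the same argument applies to the twice G\^ateaux differentiable functional ${\bf x}\mapsto E\{\|2{\bf x}-{\bf W}_{1}-{\bf W}_{2}\|-\|{\bf W}_{1}+{\bf W}_{2}\|\}$ (expectations under the null), whose first derivative vanishes at ${\bf 0}$ by symmetry and whose Hessian at ${\bf 0}$ is ${\bf J}^{(2)}_{{\bf 0}}$ --- the only place the hypothesis ``${\bf J}^{(2)}_{{\bf 0}}$ exists'' is used --- and, keeping track of the chain-rule factors from the argument $2{\bf x}$ and from the shift $2n^{-1/2}\boldsymbol{\eta}$, it gives the convergence of $n^{1/2}\boldsymbol{\theta}_{n}$ to the mean of the stated limiting Gaussian. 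A final appeal to Slutsky's lemma finishes the proof.

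The routine ingredients are the two null CLTs and the degenerate $U$-statistic bound, all taken verbatim from Theorem~\ref{thm1}. The part requiring genuine care is the drift identification: one is differentiating twice ``under the expectation'' at ${\bf 0}$, where the integrand ${\bf x}\mapsto{\bf S}_{{\bf x}}$ is not differentiable and its increments even blow up near the origin. The hypotheses ``${\bf J}^{(1)}_{{\bf 0}}$ (resp.\ ${\bf J}^{(2)}_{{\bf 0}}$) exists'' are precisely what makes this legitimate: averaging against the nonatomic law of ${\bf W}_{1}$ smooths the functional, so it is genuinely twice G\^ateaux differentiable and its second derivative can be read off directly, while the first derivative passes through the expectation thanks to the elementary $\|{\bf h}\|$-domination of the difference quotients. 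The secondary technical point is the $L^{2}$-negligibility of the difference terms ${\bf D}_{i}$ and $\boldsymbol{\varphi}_{n}({\bf V}_{i})-\boldsymbol{\varphi}_{0}({\bf V}_{i})$: continuity of ${\bf S}_{{\bf x}}$ at nonzero arguments, nonatomicity (to keep ${\bf V}_{1}$, resp.\ ${\bf V}_{1}+{\bf V}'$, off the origin a.s.), and the bound $\|{\bf S}_{{\bf x}}\|\leq 1$ give $L^{2}$-convergence of the single terms by dominated convergence, after which the type $2$ property reduces the centered sums to single second moments.
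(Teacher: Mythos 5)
Your proof is correct and arrives at the paper's conclusion, but it routes the key step differently. You and the paper agree on the Hoeffding decomposition of ${\bf T}_{SR}$, on recycling the bound \eqref{thm1-eq1} for the degenerate remainder, and on identifying the drift: since the first G\^ateaux derivatives of ${\bf x} \mapsto E\{||{\bf x} - {\bf W}_{1}|| - ||{\bf W}_{1}||\}$ and ${\bf x} \mapsto E\{||2{\bf x} - {\bf W}_{1} - {\bf W}_{2}|| - ||{\bf W}_{1} + {\bf W}_{2}||\}$ vanish at ${\bf 0}$ by symmetry, $n^{1/2}\boldsymbol{\nu}(\boldsymbol{\eta}_{n})$ and $n^{1/2}\boldsymbol{\theta}(\boldsymbol{\eta}_{n})$ converge to the corresponding Hessian terms (like the paper, you leave the chain-rule constant coming from the argument $2{\bf x}$ implicit, which is where the stated mean of the ${\bf T}_{SR}$ limit would need to be pinned down). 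The genuine divergence is in proving asymptotic Gaussianity of the linear part. The paper treats $\{{\boldsymbol{\Phi}}_{n}({\bf W}_{i})\}$ as a rowwise i.i.d.\ triangular array and verifies all three conditions of Corollary $7.8$ in \citet{AG80}, including the flat-concentration condition involving the subspaces ${\cal F}_{k}$ and $E[d^{2}\{\cdot,{\cal F}_{k}\}]$. You instead exploit the pure location-shift structure to couple the alternative to the null, ${\bf W}_{i} = {\bf V}_{i} + n^{-1/2}\boldsymbol{\eta}$ with the ${\bf V}_{i}$'s having the fixed null law, apply the ordinary i.i.d.\ CLT of \citet[Thm. $7.5$(i)]{AG80} to the null summands exactly as in Thm.~\ref{thm1}, and dispose of the centred difference terms ${\bf D}_{i}$ and $\boldsymbol{\varphi}_{n}({\bf V}_{i}) - \boldsymbol{\varphi}_{0}({\bf V}_{i})$ in $L^{2}$ via the type $2$ inequality together with dominated convergence (using $||{\bf S}_{{\bf x}}|| \leq 1$, norm-continuity of ${\bf x} \mapsto {\bf S}_{{\bf x}}$ away from the origin, and nonatomicity) --- the same ingredients the paper uses in its display \eqref{thm2-eq3}, just deployed to kill a second moment rather than to check convergence of projected variances. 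Your route is shorter and needs no machinery beyond what Thm.~\ref{thm1} already invokes; its cost is that it is wedded to the exact coupling of the shifted law to a fixed null law, whereas the triangular-array argument would extend to more general sequences of local alternatives where no such coupling is available.
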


\begin{proof}
\indent We first derive the asymptotic distribution of ${\bf T}_{SR}$. Let $\boldsymbol{\eta}_{n} = n^{-1/2}\boldsymbol{\eta}$. Applying the Hoeffding type decomposition for Banach space valued $U$-statistics as in the proof of Thm. $1$, it follows that
\begin{equation}
{\bf T}_{SR} - \boldsymbol{\theta}(\boldsymbol{\eta}_{n}) = \frac{2}{n} \sum_{i=1}^{n} \{E({\bf S}_{{\bf W}_{i} + {\bf W}'}\mid{\bf W}_{i}) - \boldsymbol{\theta}(\boldsymbol{\eta}_{n})\} + \widetilde{{\bf R}}_{n},  \label{thm2-eq1}
\end{equation}
where ${\bf W}'$ is an independent copy of ${\bf W}$. Arguing as in the proof of Thm. $1$, it can be shown that $E(||n^{1/2}\widetilde{{\bf R}}_{n}||^{q})$ satisfies the bound obtained in \eqref{thm1-eq1} for every $n \geq 2$ and any $q \in (4/3,p]$. Thus, $n^{1/2}\widetilde{{\bf R}}_{n} \rightarrow 0$ {\it in probability} as $n \rightarrow \infty$ under the sequence of shrinking shifts. \\
\indent By definition, $\boldsymbol{\theta}(\boldsymbol{\eta}_{n}) = E({\bf S}_{{\bf W}_{1}^{0} + {\bf W}_{2}^{0} + 2\boldsymbol{\eta}_{n}}) = E({\bf S}_{2\boldsymbol{\eta}_{n} - {\bf W}_{1}^{0} - {\bf W}_{2}^{0}})$, where ${\bf W}_{i}^{0} = {\bf W}_{i} - \boldsymbol{\eta}_{n}$ has the same distribution as that of ${\bf W}_{i}$ under the null hypothesis for $1 \leq i \leq n$. So, $n^{1/2}\boldsymbol{\theta}(\boldsymbol{\eta}_{n})$ converges to ${\bf J}_{{\bf 0}}^{(2)}(\boldsymbol{\eta})$ as $n \rightarrow \infty$. \\
\indent Define ${\boldsymbol{\Phi}}_{n}({\bf W}_{i}) = 2n^{-1/2}\{E({\bf S}_{{\bf W}_{i} + {\bf W}'}\mid{\bf W}_{i}) - \boldsymbol{\theta}(\boldsymbol{\eta}_{n})\}$. So, $E\{{\boldsymbol{\Phi}}_{n}({\bf W}_{i})\} = {\bf 0}$. To prove the asymptotic Gaussianity of $\sum_{i=1}^{n} {\boldsymbol{\Phi}}_{n}({\bf W}_{i})$, it is enough to show that the triangular array $\{{\boldsymbol{\Phi}}_{n}({\bf W}_{1}),{\boldsymbol{\Phi}}_{n}({\bf W}_{2}),\ldots,{\boldsymbol{\Phi}}_{n}({\bf W}_{n})\}_{n=1}^{\infty}$ of rowwise i.i.d. random elements satisfy the conditions of Corollary $7.8$ in \citet{AG80}. \\
\indent Observe that for any $\epsilon > 0$,
\begin{eqnarray*}
 \sum_{i=1}^{n} P\{||{\boldsymbol{\Phi}}_{n}({\bf W}_{i})|| > \epsilon\} \leq \frac{8}{({\epsilon}n)^{3/2}}\sum_{i=1}^{n} E\{||E({\bf S}_{{\bf W}_{i} + {\bf W}'}\mid {\bf W}_{i}) - \boldsymbol{\theta}(\boldsymbol{\eta}_{n})||^{3}\} \leq \frac{64}{(\epsilon^{3}n)^{1/2}}.
\end{eqnarray*}
Thus, $\lim_{n \rightarrow \infty} \sum_{i=1}^{n} P\{||{\boldsymbol{\Phi}}_{n}({\bf W}_{i})|| > \epsilon\} = 0$ for every $\epsilon > 0$, which ensures that condition ($1$) of Corollary $7.8$ in \citet{AG80} holds. \\
\indent We next verify condition ($2$) of Corollary $7.8$ in \citet{AG80}. Let us fix ${\bf f} \in {\cal X}$. Since $||{\bf S}_{{\bf x}}|| = 1$ for all ${\bf x} \neq {\bf 0}$, we can choose $\delta = 1$ in that condition ($2$). Then, using the linearity of ${\bf f}$, we have
\begin{eqnarray}
\sum_{i=1}^{n} E[{\bf f}^{2}\{{\boldsymbol{\Phi}}_{n}({\bf W}_{i})\}] = 4n^{-1} \sum_{i=1}^{n} E[\{V_{n,i} - E(V_{n,i})\}^{2}],   \label{thm2-eq2}
\end{eqnarray}
where $V_{n,i} = {\bf f}\{E({\bf S}_{{\bf W}_{i} + {\bf W}'}\mid{\bf W}_{i})\}$. Since the ${\bf W}_{i}$'s are identically distributed, the right hand side of \eqref{thm2-eq2} equals $4E[\{V_{n,1} - E(V_{n,1})\}^{2}]$. Note that $V_{n,1} = {\bf f}\{E({\bf S}_{{\bf W}_{1} + {\bf W}_{2}^{0} + \boldsymbol{\eta}_{n}}\mid{\bf W}_{1})\}$. Since the norm in ${\cal X}$ is assumed to be twice G\^ateaux differentiable, it follows from Thm. $4.6.15$(a) and Proposition $4.6.16$ in \citet{BV10} that the norm in ${\cal X}$ is Fr\'echet differentiable. This in turn implies that the map ${\bf x} \mapsto {\bf S}_{{\bf x}}$ is continuous on ${\cal X}\backslash\{{\bf 0}\}$ (see \citet[Corollary $4.2.12$]{BV10}). Using this fact, it follows from the dominated convergence theorem for Banach space valued random elements that
\begin{eqnarray}
E({\bf S}_{{\bf W}_{1} + {\bf W}_{2}^{0} + \boldsymbol{\eta}_{n}}\mid{\bf W}_{1} = {\bf w}_{1}) &=& E({\bf S}_{{\bf w}_{1} + {\bf W}_{2}^{0} + \boldsymbol{\eta}_{n}}) \nonumber \\
&\longrightarrow& E({\bf S}_{{\bf w}_{1} + {\bf W}_{2}^{0}}) \ = \ E({\bf S}_{{\bf W}_{1}^{0} + {\bf W}_{2}^{0}}\mid{\bf W}_{1}^{0} = {\bf w}_{1})  \label{thm2-eq3}
\end{eqnarray}
as $n \rightarrow \infty$ for almost all values of ${\bf w}_{1}$. Thus, $E(V_{n,1})$ converges to $E[{\bf f}\{E({\bf S}_{{\bf W}_{1}^{0} + {\bf W}_{2}^{0}}\mid{\bf W}_{1}^{0})\}]$ as $n \rightarrow \infty$ by the usual dominated convergence theorem. Similarly, $E(V_{n,1}^{2})$ converges to $E[{\bf f}^{2}\{E({\bf S}_{{\bf W}_{1}^{0} + {\bf W}_{2}^{0}}\mid{\bf W}_{1}^{0})\}]$ as $n \rightarrow \infty$. So, $\sum_{i=1}^{n} E[{\bf  f}^{2}\{{\boldsymbol{\Phi}}_{n}({\bf W}_{i})\}] \rightarrow {\boldsymbol{\Pi}}_{2}({\bf f},{\bf f})$ as $n \rightarrow \infty$, where ${\boldsymbol{\Pi}}_{2}$ is as defined before Thm. \ref{thm1} in Sect. \ref{2} and the expectations in that definition are taken under the null hypothesis. This completes the verification of condition ($2$) of Corollary $7.8$ in \citet{AG80}. \\
\indent Finally, for the verification of condition ($3$) of Corollary $7.8$ in \citet{AG80}, suppose that $\{{\cal F}_{k}\}_{k \geq 1}$ is a sequence of finite dimensional subspaces of ${\cal X}^{*}$ such that ${\cal F}_{k} \subseteq {\cal F}_{k+1}$ for all $k \geq 1$, and the closure of $\bigcup_{k=1}^{\infty} {\cal F}_{k}$ is ${\cal X}^{*}$. Such a sequence of subspaces exists because of the separability of ${\cal X}^{*}$. For any ${\bf x} \in {\cal X}^{*}$ and any $k \geq 1$, we define $d({\bf x},{\cal F}_{k}) = \inf\{||{\bf x} - {\bf y}|| : {\bf y} \in {\cal F}_{k}\}$. It is straightforward to verify that for every $k \geq 1$, the map ${\bf x} \mapsto d({\bf x}, {\cal F}_{k})$ is continuous and bounded on any closed ball in ${\cal X}^{*}$. Thus, using \eqref{thm2-eq3}, it follows that $\boldsymbol{\theta}(\boldsymbol{\eta}_{n}) \rightarrow 0$ as $n \rightarrow \infty$, and
\begin{eqnarray*}
\sum_{i=1}^{n} E[d^{2}\{{\boldsymbol{\Phi}}_{n}({\bf W}_{i}), {\cal F}_{k}\}] &=& 4n^{-1} \sum_{i=1}^{n} E[d^{2}\{E({\bf S}_{{\bf W}_{i} + {\bf W}' + \boldsymbol{\eta}_{n}}\mid{\bf W}_{i}) - \boldsymbol{\theta}(\boldsymbol{\eta}_{n}), {\cal F}_{k}\}] \\
&=& 4E[d^{2}\{E({\bf S}_{{\bf W}_{1} + {\bf W}_{2}^{0} + \boldsymbol{\eta}_{n}}\mid{\bf W}_{1}) - \boldsymbol{\theta}(\boldsymbol{\eta}_{n}), {\cal F}_{k}\}]  \\
&\longrightarrow& 4E[d^{2}\{E({\bf S}_{{\bf W}_{1}^{0} + {\bf W}_{2}^{0}}\mid{\bf W}_{1}^{0}), {\cal F}_{k}\}]
\end{eqnarray*}
as $n \rightarrow \infty$. From the choice of the ${\cal F}_{k}$'s, it can be shown that $d({\bf x}, {\cal F}_{k}) \rightarrow 0$ as $k \rightarrow \infty$ for all ${\bf x} \in {\cal X}^{*}$. So, we have
$\lim_{k \rightarrow \infty} E[d^{2}\{E({\bf S}_{{\bf W}_{1}^{0} + {\bf W}_{2}^{0}}\mid{\bf W}_{1}^{0}), {\cal F}_{k}\}] = 0$,
and this completes the verification of condition ($3$) of Corollary $7.8$ in \citet{AG80}. \\
\indent Thus, $\sum_{i=1}^{n} {\boldsymbol{\Phi}}_{n}({\bf W}_{i})$ converges {\it weakly} to a zero mean Gaussian distribution in ${\cal X}$ as $n \rightarrow \infty$. Further, its asymptotic covariance is ${\boldsymbol{\Pi}}_{2}$, which was obtained while checking condition ($2$) of Corollary $7.8$ in \citet{AG80}. Thus, it follows from equation \eqref{thm2-eq1} at the beginning of the proof that
\begin{eqnarray*}
 n^{1/2}\{{\bf T}_{SR} - \boldsymbol{\theta}(\boldsymbol{\eta}_{n})\} \longrightarrow G({\bf 0},{\boldsymbol{\Pi}}_{2})
\end{eqnarray*}
{\it weakly} as $n \rightarrow \infty$ under the sequence of shrinking shifts. The above weak convergence and the fact that $n^{1/2}\boldsymbol{\theta}(\boldsymbol{\eta}_{n})$ converges to ${\bf J}_{{\bf 0}}^{(2)}(\boldsymbol{\eta})$ as $n \rightarrow \infty$ complete the derivation of the asymptotic distribution of ${\bf T}_{SR}$. \\
\indent Since ${\bf T}_{S}$ is a sum of independent random elements, its asymptotic distribution can be obtained by using arguments similar to those used to derive the asymptotic distribution of $\sum_{i=1}^{n} {\boldsymbol{\Phi}}_{n}({\bf W}_{i})$ given above. It follows that $n^{1/2}{\bf T}_{S}$ has an asymptotic Gaussian distribution with mean ${\bf J}_{{\bf 0}}^{(1)}(\boldsymbol{\eta})$ and covariance ${\boldsymbol{\Pi}}_{1}$ as $n \rightarrow \infty$ under the sequence of shrinking alternatives considered in the theorem.
\end{proof}

\indent Let ${\cal X}$ be a separable Hilbert space and ${\bf Y}_{1} - {\bf X}_{1} = \sum_{k=1}^{\infty} Z_{k}{\boldsymbol{\phi}}_{k}$ for an orthonormal basis ${\boldsymbol{\phi}}_{1}, {\boldsymbol{\phi}}_{2}, \ldots$ of ${\cal X}$ and the $Z_{k}$'s are real-valued random variables. Then, the expectations defining ${\bf J}^{(1)}_{{\bf 0}}$  and ${\bf J}^{(2)}_{{\bf 0}}$ are finite if any two dimensional marginal of $(Z_{1},Z_{2},\ldots)$ has a density that is bounded on bounded subsets of $\mathbb{R}^{2}$. If ${\cal X}$ is a $L_{p}$ space for some $p \in (1,\infty)$, then ${\bf J}^{(1)}_{{\bf 0}}$ and ${\bf J}^{(2)}_{{\bf 0}}$ exist if $E(||{\bf W}_{1}||^{-1})$ and $E(||{\bf W}_{1} + {\bf W}_{2}||^{-1})$ are finite. \\
\indent We next compare the asymptotic powers of the tests based on ${\bf T}_{S}$ and ${\bf T}_{SR}$ with some paired sample mean based tests for functional data in $L_{2}[a,b]$, where $a, b \in \mathbb{R}$. \citet{CFF04} studied a test for analysis of variance in $L_{2}[a,b]$, and the test statistic for the two sample problem is based on $||\overline{{\bf X}} - \overline{{\bf Y}}||^{2}$. We consider the natural paired sample analog of this statistic and define it as $T_{1} = ||\overline{{\bf W}}||^{2}$. \citet{HKR13} studied a couple of two sample tests in $L_{2}[a,b]$ based on the projections of the sample functions onto the subspace formed by finitely many eigenfunctions of the sample pooled covariance operator. We consider the paired sample analogs of these tests, and the corresponding test statistics are defined as $T_{2} = \sum_{k=1}^{L} (\langle\overline{{\bf W}},\widehat{\boldsymbol{\psi}}_{k}\rangle)^{2}$ and $T_{3} =  \sum_{k=1}^{L} \widehat{\lambda}_{k}^{-1} (\langle\overline{{\bf W}},\widehat{\boldsymbol{\psi}}_{k}\rangle)^{2}$. Here, the $\widehat{\lambda}_{k}$'s denote the eigenvalues of the empirical covariance of the ${\bf W}_{i}$'s in descending order of magnitudes, and the  $\widehat{\boldsymbol{\psi}}_{k}$'s are the corresponding empirical eigenfunctions. Consider i.i.d. paired observations $({\bf X}_{i},{\bf Y}_{i})$, $1 \leq i \leq n$, where the ${\bf X}_{i}$'s and the ${\bf Y}_{i}$'s take values in $L_{2}[a,b]$. The next theorem gives the asymptotic distributions of $T_{1}$, $T_{2}$ and $T_{3}$ under sequences of shrinking location shift alternatives, where ${\bf W}_{i} = {\bf Y}_{i} - {\bf X}_{i}$ is symmetrically distributed about $n^{-1/2}\boldsymbol{\eta}$ for some fixed nonzero $\boldsymbol{\eta} \in {\cal X}$ and $1 \leq i \leq n$.
\begin{theorem}  \label{thm3}
(a) If $E(||{\bf W}_{1}||^{2}) < \infty$, $nT_{1}$ converges {\it weakly} to $\sum_{k=1}^{\infty} \lambda_{k}\chi^{2}_{(1)}(\beta_{k}^{2}/\lambda_{k})$ as $n \rightarrow \infty$. Here, the $\lambda_{k}$'s are the eigenvalues of the covariance operator $\boldsymbol{\Sigma}$ of ${\bf W}_{1}$ in decreasing order of magnitudes, the $\boldsymbol{\psi}_{k}$'s are the eigenfunctions corresponding to the $\lambda_{k}$'s, $\beta_{k} = \langle\boldsymbol{\eta},\boldsymbol{\psi}_{k}\rangle$, and $\chi^{2}_{(1)}(\beta_{k}^{2}/\lambda_{k})$ denotes the non-central chi-square variable with one degree of freedom and non-centrality parameter $\beta_{k}^{2}/\lambda_{k}$ for $k \geq 1$. \\
(b) Suppose that for some $L \geq 1$, we have $\lambda_{1} > \ldots > \lambda_{L} > \lambda_{L+1} > 0$. Assume that $E(||{\bf W}_{1}||^{4}) < \infty$. Then, $nT_{2}$ converges {\it weakly} to $\sum_{k=1}^{L} \lambda_{k}\chi^{2}_{(1)}(\beta_{k}^{2}/\lambda_{k})$, and $nT_{3}$ converges {\it weakly} to $\sum_{k=1}^{L} \chi^{2}_{(1)}(\beta_{k}^{2}/\lambda_{k})$ as $n \rightarrow \infty$.
\end{theorem}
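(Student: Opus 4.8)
The plan is to route both parts through the weak limit of the single Hilbert-space average $n^{1/2}\overline{{\bf W}}$ and then read off the limiting laws of $T_{1}$, $T_{2}$, $T_{3}$ by the continuous mapping theorem together with Slutsky-type arguments. Write ${\bf W}_{i} = {\bf W}_{i}^{0} + n^{-1/2}\boldsymbol{\eta}$, where the ${\bf W}_{i}^{0}$ are i.i.d.\ from the fixed law of ${\bf W}_{1}$ under the null (symmetric about ${\bf 0}$); since $E(\|{\bf W}_{1}\|^{2}) < \infty$ this law has mean ${\bf 0}$, and its covariance operator is $\boldsymbol{\Sigma}$. Then $n^{1/2}\overline{{\bf W}} = n^{-1/2}\sum_{i=1}^{n}{\bf W}_{i}^{0} + \boldsymbol{\eta}$, so by the central limit theorem for i.i.d.\ random elements in a separable type $2$ Banach space (\citet[Thm. $7.5$(i)]{AG80}) applied to the ${\bf W}_{i}^{0}$, together with the deterministic shift by $\boldsymbol{\eta}$, the sequence $n^{1/2}\overline{{\bf W}}$ converges weakly to ${\bf Z} := G(\boldsymbol{\eta},\boldsymbol{\Sigma})$; in particular $\|n^{1/2}\overline{{\bf W}}\| = O_{P}(1)$.

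For part (a), since $\|\cdot\|^{2}$ is continuous on ${\cal X}$, the continuous mapping theorem gives that $nT_{1} = \|n^{1/2}\overline{{\bf W}}\|^{2}$ converges weakly to $\|{\bf Z}\|^{2}$. Expanding ${\bf Z}$ in the orthonormal eigenbasis $\{\boldsymbol{\psi}_{k}\}$ of $\boldsymbol{\Sigma}$, the coordinates $\langle{\bf Z},\boldsymbol{\psi}_{k}\rangle$ are independent $N(\beta_{k},\lambda_{k})$ random variables (independence because $\boldsymbol{\Sigma}$ is diagonal in this basis), so by Parseval's identity $\|{\bf Z}\|^{2} = \sum_{k}\langle{\bf Z},\boldsymbol{\psi}_{k}\rangle^{2} = \sum_{k}\lambda_{k}\chi^{2}_{(1)}(\beta_{k}^{2}/\lambda_{k})$ with the summands independent; the series converges almost surely since $\sum_{k}E\{\lambda_{k}\chi^{2}_{(1)}(\beta_{k}^{2}/\lambda_{k})\} = \sum_{k}(\lambda_{k}+\beta_{k}^{2}) = \mathrm{tr}(\boldsymbol{\Sigma}) + \|\boldsymbol{\eta}\|^{2} < \infty$. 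This is the asserted limit.

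For part (b) one must in addition control the data-dependent basis. Because the shift cancels when the data are centred at $\overline{{\bf W}}$, the empirical covariance operator $\widehat{\boldsymbol{\Sigma}}_{n}$ of the ${\bf W}_{i}$ is exactly the empirical covariance of the i.i.d.\ ${\bf W}_{i}^{0}$, so it converges to $\boldsymbol{\Sigma}$ in Hilbert--Schmidt norm, the fourth-moment assumption yielding $n^{1/2}$-consistency. Under the separation $\lambda_{1} > \cdots > \lambda_{L} > \lambda_{L+1} > 0$, standard perturbation bounds for the eigenvalues and eigenprojections of compact self-adjoint operators (as in \citet{HKR13}) give $\widehat{\lambda}_{k} \to \lambda_{k}$ in probability and $\|\widehat{\boldsymbol{\psi}}_{k} - \widehat{s}_{k}\boldsymbol{\psi}_{k}\| = o_{P}(1)$ for suitable random signs $\widehat{s}_{k} \in \{-1,1\}$, $1 \le k \le L$. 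Writing
\begin{equation*}
n^{1/2}\langle\overline{{\bf W}},\widehat{\boldsymbol{\psi}}_{k}\rangle = \widehat{s}_{k}\langle n^{1/2}\overline{{\bf W}},\boldsymbol{\psi}_{k}\rangle + \langle n^{1/2}\overline{{\bf W}},\widehat{\boldsymbol{\psi}}_{k} - \widehat{s}_{k}\boldsymbol{\psi}_{k}\rangle,
\end{equation*}
the last term is $O_{P}(1)\cdot o_{P}(1) = o_{P}(1)$ by Cauchy--Schwarz, and squaring removes the sign, so $(n^{1/2}\langle\overline{{\bf W}},\widehat{\boldsymbol{\psi}}_{k}\rangle)^{2} = \langle n^{1/2}\overline{{\bf W}},\boldsymbol{\psi}_{k}\rangle^{2} + o_{P}(1)$. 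The vector $(\langle n^{1/2}\overline{{\bf W}},\boldsymbol{\psi}_{1}\rangle,\ldots,\langle n^{1/2}\overline{{\bf W}},\boldsymbol{\psi}_{L}\rangle)$ is a fixed continuous linear image of $n^{1/2}\overline{{\bf W}}$, hence converges weakly to $(\langle{\bf Z},\boldsymbol{\psi}_{1}\rangle,\ldots,\langle{\bf Z},\boldsymbol{\psi}_{L}\rangle)$, whose components are independent $N(\beta_{k},\lambda_{k})$. Summing and using Slutsky's theorem, $nT_{2}$ converges weakly to $\sum_{k=1}^{L}\langle{\bf Z},\boldsymbol{\psi}_{k}\rangle^{2} = \sum_{k=1}^{L}\lambda_{k}\chi^{2}_{(1)}(\beta_{k}^{2}/\lambda_{k})$; combining this with $\widehat{\lambda}_{k}^{-1} \to \lambda_{k}^{-1}$ in probability (valid as $\lambda_{k} > 0$) gives, again by Slutsky, that $nT_{3}$ converges weakly to $\sum_{k=1}^{L}\lambda_{k}^{-1}\langle{\bf Z},\boldsymbol{\psi}_{k}\rangle^{2} = \sum_{k=1}^{L}\chi^{2}_{(1)}(\beta_{k}^{2}/\lambda_{k})$.

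The main obstacle is the bookkeeping in part (b): one must verify that substituting the estimated eigenfunctions $\widehat{\boldsymbol{\psi}}_{k}$ for the population $\boldsymbol{\psi}_{k}$ in the projections of the $n^{1/2}$-scaled mean costs only $o_{P}(1)$, uniformly over $k \le L$ and despite the sign indeterminacy of eigenfunctions; this is precisely where the eigenvalue-gap condition and the fourth-moment assumption are used (via consistency, indeed $n^{1/2}$-consistency, of the empirical eigenelements). Everything else --- the Hilbert-space central limit theorem, the continuous mapping step for $T_{1}$, and the identification of $\|{\bf Z}\|^{2}$ and of the $\lambda_{k}^{-1}\langle{\bf Z},\boldsymbol{\psi}_{k}\rangle^{2}$ as (non-central) weighted chi-square mixtures --- is routine.
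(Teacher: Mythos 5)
Your proof is correct and follows essentially the same route as the paper: the Hilbert-space CLT applied to $n^{1/2}\overline{{\bf W}}$ after recentering by the shrinking shift, the spectral decomposition of $\boldsymbol{\Sigma}$ for part (a), and the reduction of the projections onto the empirical eigenfunctions to projections onto the population ones for part (b). The only cosmetic difference is that where the paper invokes the argument of Thm.~5.3 in \citet{HK12} for $\max_{1\le k\le L} n^{1/2}|\langle\overline{{\bf W}},\widehat{\boldsymbol{\psi}}_{k}-\widehat{c}_{k}\boldsymbol{\psi}_{k}\rangle|=o_{P}(1)$, you obtain the same bound directly by Cauchy--Schwarz from the tightness of $n^{1/2}\overline{{\bf W}}$ and the consistency of the empirical eigenelements, which is legitimate here precisely because the location shift is $O(n^{-1/2})$.
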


\begin{proof}
(a) As in the proof of Theorem \ref{thm2}, we denote $\boldsymbol{\eta}_{n} = n^{-1/2}\boldsymbol{\eta}$, and ${\bf W}_{i}^{0} = {\bf W}_{i} - \boldsymbol{\eta}_{n}$ has the same distribution as that of ${\bf W}_{i}$ under the null hypothesis for $1 \leq i \leq n$. Now, by the central limit theorem for i.i.d. random elements in a separable Hilbert space (see \citet[Thm. $7.5$(i)]{AG80}), it follows that $n^{1/2}\overline{{\bf W}^{0}}$ converges {\it weakly} to $G({\bf 0},\boldsymbol{\Sigma})$ as $n \rightarrow \infty$. Thus, $n^{1/2}\overline{{\bf W}} = n^{1/2}(\overline{{\bf W}^{0}} + \boldsymbol{\eta}_{n})$, converges {\it weakly} to $G(\boldsymbol{\eta},\boldsymbol{\Sigma})$ as $n \rightarrow \infty$. Now, the distribution of $||G(\boldsymbol{\eta},\boldsymbol{\Sigma})||^{2}$ is the same as that of $\sum_{k=1}^{\infty} \lambda_{k}\chi^{2}_{(1)}(\beta_{k}^{2}/\lambda_{k})$ using the spectral decomposition of the compact self-adjoint operator $\boldsymbol{\Sigma}$. This proves part (a) of the proposition. \\
(b) Let ${\bf V} = (\langle\overline{{\bf W}},{\boldsymbol{\psi}}_{1}\rangle,\langle\overline{{\bf W}},{\boldsymbol{\psi}}_{2}\rangle,\ldots,\langle\overline{{\bf W}},{\boldsymbol{\psi}}_{L}\rangle)$ and $\widetilde{\boldsymbol{\beta}} = (\beta_{1},\ldots,\beta_{L})$. It follows from the central limit theorem in $\mathbb{R}^{L}$ that the distribution of $n^{1/2}({\bf V} - n^{-1/2}\widetilde{\boldsymbol{\beta}})$ converges {\it weakly} to the $L$-variate Gaussian distribution $N_{L}({\bf 0},\boldsymbol{\Lambda}_{L})$ as $n \rightarrow \infty$ under the sequence of shrinking shifts considered, where $\boldsymbol{\Lambda}_{L}$ is the diagonal matrix $Diag(\lambda_{1},\ldots,\lambda_{L})$. Thus, under the given sequence of shifts, the distribution of $n^{1/2}{\bf V}$ converges {\it weakly} to the $L$-variate Gaussian distribution $N_{L}(\widetilde{\boldsymbol{\beta}},\boldsymbol{\Lambda}_{L})$ distribution as $n \rightarrow \infty$. \\
\indent From arguments similar to those in the proof of Thm. $5.3$ in \citet{HK12}, and using the assumptions in the present theorem, we get
\begin{eqnarray}
&& \max_{1 \leq k \leq L} n^{1/2} |\langle\overline{{\bf W}},\widehat{\boldsymbol{\psi}}_{k} - \widehat{c}_{k}\boldsymbol{\psi}_{k}\rangle| = o_{P}(1)  \label{thm3-eq1}
\end{eqnarray}
as $n \rightarrow \infty$ under the sequence of shrinking shifts. Here $\widehat{\boldsymbol{\psi}}_{k}$ is the empirical version of $\boldsymbol{\psi}_{k}$ and $\widehat{c}_{k} = sign(\langle\widehat{\boldsymbol{\psi}}_{k},\boldsymbol{\psi}_{k}\rangle)$. In view of \eqref{thm3-eq1}, the limiting distribution of $n\sum_{k=1}^{L} (\langle\overline{{\bf W}},\widehat{\boldsymbol{\psi}}_{k}\rangle)^{2}$ is the same as that of $n\sum_{k=1}^{L} (\langle\overline{{\bf W}},\widehat{c}_{k}\boldsymbol{\psi}_{k}\rangle)^{2} =  n||{\bf V}||^{2}$, and the latter converges {\it weakly} to $||N_{L}(\widetilde{\boldsymbol{\beta}},\boldsymbol{\Lambda}_{L})||^{2}$ as $n \rightarrow \infty$. Thus, $nT_{2}$ converges {\it weakly} to $\sum_{k=1}^{L} \lambda_{k}\chi^{2}_{(1)}(\beta_{k}^{2}/\lambda_{k})$ as $n \rightarrow \infty$ under the sequence of shrinking shifts considered. \\
\indent It also follows using similar arguments as in the proof of Thm. $5.3$ in \citet{HK12} that under the assumptions of the present theorem, we have
\begin{eqnarray*}
&& \max_{1 \leq k \leq L} n^{1/2}\widehat{\lambda}_{k}^{-1/2}|\langle\overline{{\bf W}},\widehat{\boldsymbol{\psi}}_{k} - \widehat{c}_{k}\boldsymbol{\psi}_{k}\rangle| = o_{P}(1)
\end{eqnarray*}
as $n \rightarrow \infty$ under the given sequence of shrinking shifts. Similar arguments as in the case of $T_{2}$ yield the asymptotic distribution of $nT_{3}$, and this completes the proof.
\end{proof}

\subsection{Comparison of asymptotic powers of different tests}
\label{3.1}
\indent We now compare the asymptotic powers of the tests based on ${\bf T}_{S}$, ${\bf T}_{SR}$, $T_{1}$, $T_{2}$ and $T_{3}$ under shrinking location shifts. For this we have taken the random element ${\bf W}_{i} \in L_{2}[0,1]$ to have the same distribution as that of $\sum_{k=1}^{\infty} 2^{1/2}\{(k-0.5)\pi\}^{-1}Z_{k}sin\{(k-0.5){\pi}t\}$, where the $Z_{k}$'s are independent random variables for $k \geq 1$. We have considered two distributions of the $Z_{k}$'s, namely, the $Z_{k}$'s having $N(0,1)$ distributions, and $Z_{k} = U_{k}(V/5)^{-1/2}$ with the $U_{k}$'s having $N(0,1)$ distributions and $V$ having a chi-square distribution with $5$ degrees of freedom independent of the $U_{k}$'s for each $k \geq 1$. The latter choice is made to compare the performance of the tests based on ${\bf T}_{SR}$ and ${\bf T}_{S}$ with the mean based tests that use $T_{1}$, $T_{2}$ and $T_{3}$, when the underlying distribution is heavy tailed. The two distributions of ${\bf W}_{i}$ considered here correspond to the Karhunen-Lo\`eve expansions of the standard Brownian motion and the t process (see \citet{YTY07}) on $[0,1]$ with $5$ degrees of freedom having zero mean and covariance kernel $K(t,s) = \min(t,s)$, $t, s \in [0,1]$, respectively. We call them the sBm and the t($5$) distributions, respectively. We have chosen five degrees of freedom for the t distribution so that the finiteness of the fourth moment required in Thm. \ref{thm3} is satisfied. In a finite sample study in the next section, we will also consider t processes with one and three degrees of freedom, which violate some of the moment assumptions in Thm. \ref{thm3}. We have taken three choices of the location shift $\boldsymbol{\eta}$, namely, $\boldsymbol{\eta}_{1}(t) = c$, $\boldsymbol{\eta}_{2}(t) = ct$ and $\boldsymbol{\eta}_{3}(t) = ct(1-t)$, where $t \in [0,1]$ and $c > 0$. The plots of the shifts are given in Fig. \ref{Fig:0} below.
\begin{figure}
\begin{center}
\includegraphics[scale=0.85]{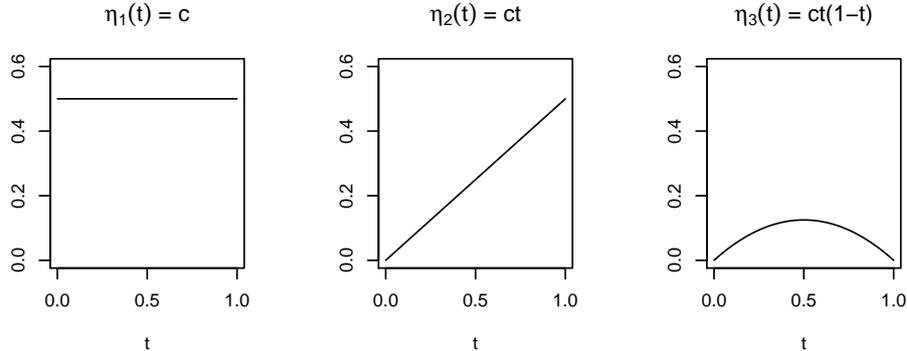}
\end{center}
\vspace{-0.2in}
\caption{Plots of the shift alternatives}
\label{Fig:0}
\end{figure}
\begin{figure}
  \begin{adjustbox}{addcode={\begin{minipage}{\width}}{\caption{%
      Plots of the asymptotic powers of the tests based on ${\bf T}_{S}$ (---), ${\bf T}_{SR}$ (- - -), $T_{1}$ (-- -- --), $T_{2}$ (-- $\times$ --) and $T_{3}$ (-- $\circ$ --) for the sBm and the t($5$) distributions under shrinking location shifts.
      }\label{Fig:1}\end{minipage}},rotate=90,center}
      \centering
      \includegraphics[height=4.3in,width=8in]{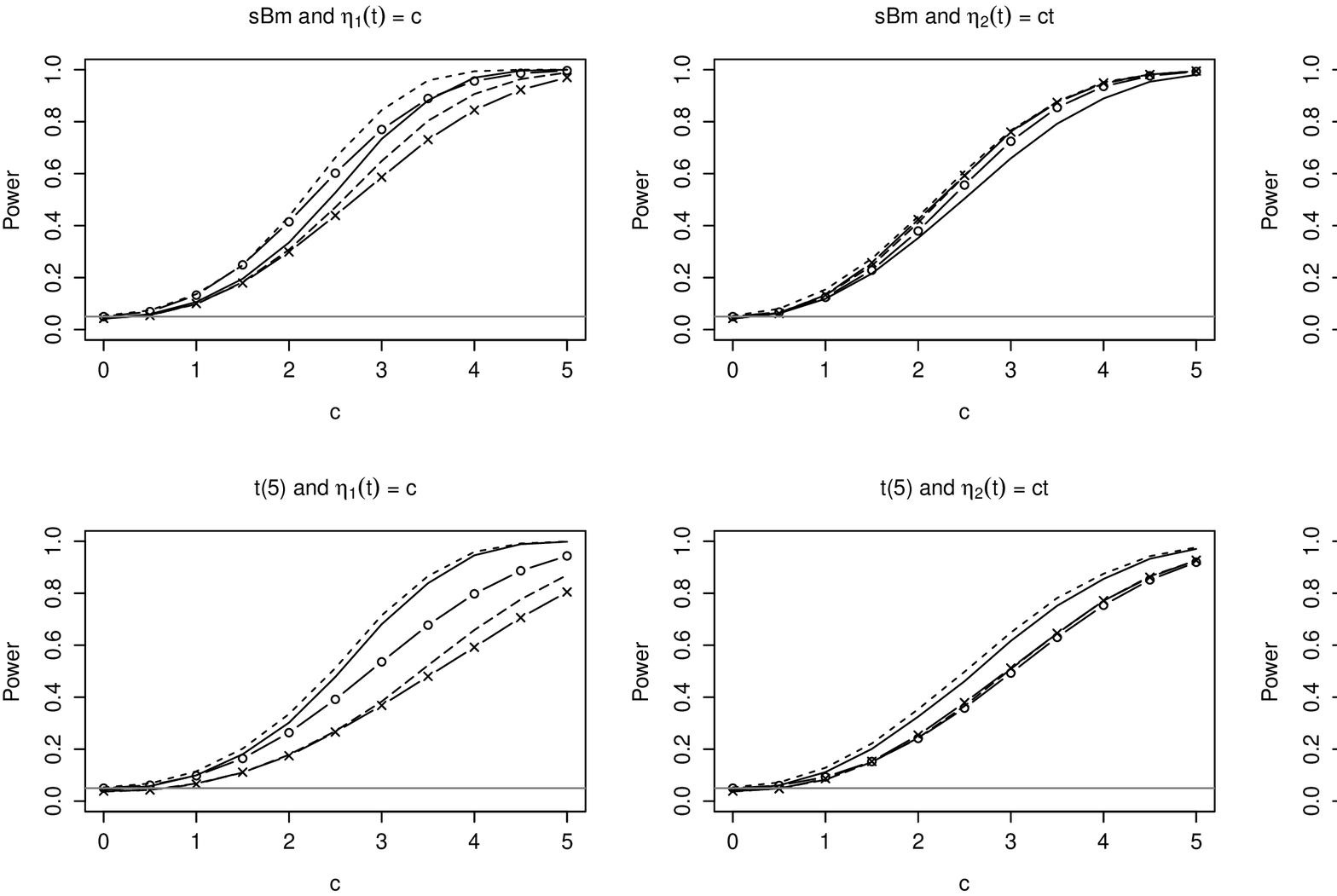}%
  \end{adjustbox}
\end{figure}

\indent For evaluating the asymptotic powers of these tests, we have used Thm. \ref{thm2} and Thm. \ref{thm3}. For each of the two distributions of ${\bf W}_{i}$, we have generated $5000$ sample functions from it. The operators ${\boldsymbol{\Pi}}_{1}$ and ${\boldsymbol{\Pi}}_{2}$ are estimated as described in Sect. \ref{2} using this sample, and the operator $\boldsymbol{\Sigma}$ is estimated by the sample covariance operator. The eigenvalues and the eigenfunctions of ${\boldsymbol{\Pi}}_{1}$, ${\boldsymbol{\Pi}}_{2}$ and $\boldsymbol{\Sigma}$ are then estimated by the eigenvalues and the eigenfunctions of the estimates of these operators. We have estimated ${\bf J}^{(1)}_{{\bf 0}}(\boldsymbol{\eta})$ and ${\bf J}^{(2)}_{{\bf 0}}(\boldsymbol{\eta})$ by their sample analogs. The asymptotic powers of the tests are then computed through $1000$ Monte Carlo simulations from the asymptotic Gaussian distributions with the estimated parameters. We have used the cumulative variance method described in \citet{HKR13} to compute the number $L$ associated with the tests based on $T_{2}$ and $T_{3}$. \\
\indent We now discuss the asymptotic powers of different tests under the sBm distribution. It is seen from Fig. \ref{Fig:1} that the tests based on ${\bf T}_{S}$ and ${\bf T}_{SR}$ asymptotically outperform the tests based on $T_{1}$ and $T_{2}$ for the shifts $\boldsymbol{\eta}_{1}(t)$ and $\boldsymbol{\eta}_{3}(t)$. The test based on ${\bf T}_{SR}$ is asymptotically more powerful than the test based on $T_{3}$ for all large values of $c$ in the case of the  shift $\boldsymbol{\eta}_{1}(t)$. However, for the shift $\boldsymbol{\eta}_{3}(t)$, the test based on $T_{3}$ is asymptotically more powerful than the test based on ${\bf T}_{SR}$. For the shift $\boldsymbol{\eta}_{2}(t)$ under the sBm distribution all the tests considered except the test based on ${\bf T}_{S}$ have similar asymptotic powers, and the latter test is asymptotically less powerful than the other competitors under the sBm distribution. \\
\indent We next consider the t($5$) distribution. The tests based on ${\bf T}_{S}$ and ${\bf T}_{SR}$ asymptotically outperform all the competing tests for all the models considered, except the test based on $T_{3}$ for the shift $\boldsymbol{\eta}_{3}(t)$. The heavy tails of the t($5$) distribution adversely effect the performance of the mean based tests, but the tests based on ${\bf T}_{S}$ and ${\bf T}_{SR}$, which use spatial signs and spatial signed ranks,  are less affected. For the shift $\boldsymbol{\eta}_{3}(t)$ under the t($5$) distribution, although the test based on $T_{3}$ asymptotically outperforms the tests based on ${\bf T}_{S}$ and ${\bf T}_{SR}$, its performance degrades significantly in comparison to its performance under the sBm distribution, which has lighter tails. \\
\indent Between the two tests proposed in this paper, the test based on ${\bf T}_{SR}$ is asymptotically more powerful than the test based on ${\bf T}_{S}$ for both the distributions and the three shift alternatives considered in this paper. However, the powers of the two tests are quite close under the t($5$) distribution.

\section{Comparison of finite sample powers of different tests}
\label{4.1}
\indent We now carry out a comparative study of the finite sample empirical powers of the tests considered in the previous section for location shift alternatives. We consider the distribution of ${\bf W}_{i}$ as in Sect. \ref{3}, i.e., ${\bf W}_{i}$ has the same distribution as that of $\sum_{k=1}^{\infty} 2^{1/2}\{(k-0.5)\pi\}^{-1}Z_{k}sin\{(k-0.5){\pi}t\}$, where the $Z_{k}$'s are independent random variables for $k \geq 1$. We have considered four distributions for the ${\bf W}_{i}$'s, namely, the sBm and the t($5$) distributions used in Sect. \ref{3} as well as the t($1$) and the t(3) distributions. For the t(1) (respectively, t(3)) distribution, $Z_{k} = U_{k}/(V/r)^{1/2}$, where the $U_{k}$'s are independent $N(0,1)$ variables and $V$ has a chi-square distribution with $r = 1$ (respectively, $r = 3$) degree of freedom independent of the $U_{k}$'s for each $k \geq 1$. The t($1$) and the t($3$) distributions are chosen to investigate the performance of the mean based tests when the moment conditions  required by them fail to hold. We have chosen $n = 20$, and each sample curve is observed at $250$ equispaced points in $[0,1]$. We consider the same three location shifts that were considered in Sect. \ref{3.1}, namely,  ${\boldsymbol{\eta}}_{1}(t) = c$, ${\boldsymbol{\eta}}_{2}(t) = ct$ and ${\boldsymbol{\eta}}_{3}(t) = ct(1-t)$ for $t \in [0,1]$ and $c > 0$. All the sizes and the powers are evaluated by averaging over $1000$ Monte-Carlo simulations. The estimated standard errors of the empirical sizes (respectively, powers) of different tests are of the order of $10^{-3}$ (respectively, $10^{-2}$ or less) for all the distributions considered. The empirical power curves of the tests are plotted in Fig. \ref{Fig:2} and Fig. \ref{Fig:3}.
\begin{figure}
  \begin{adjustbox}{addcode={\begin{minipage}{\width}}{\caption{%
      Plots of the empirical powers of the tests based on ${\bf T}_{S}$ (---), ${\bf T}_{SR}$ (- - -), $T_{1}$ (-- -- --), $T_{2}$ (-- $\times$ --) and $T_{3}$ (-- $\circ$ --) for the sBm and the t($5$) distributions.
      }\label{Fig:2}\end{minipage}},rotate=90,center}
      \centering
      \includegraphics[height=4.3in,width=8in]{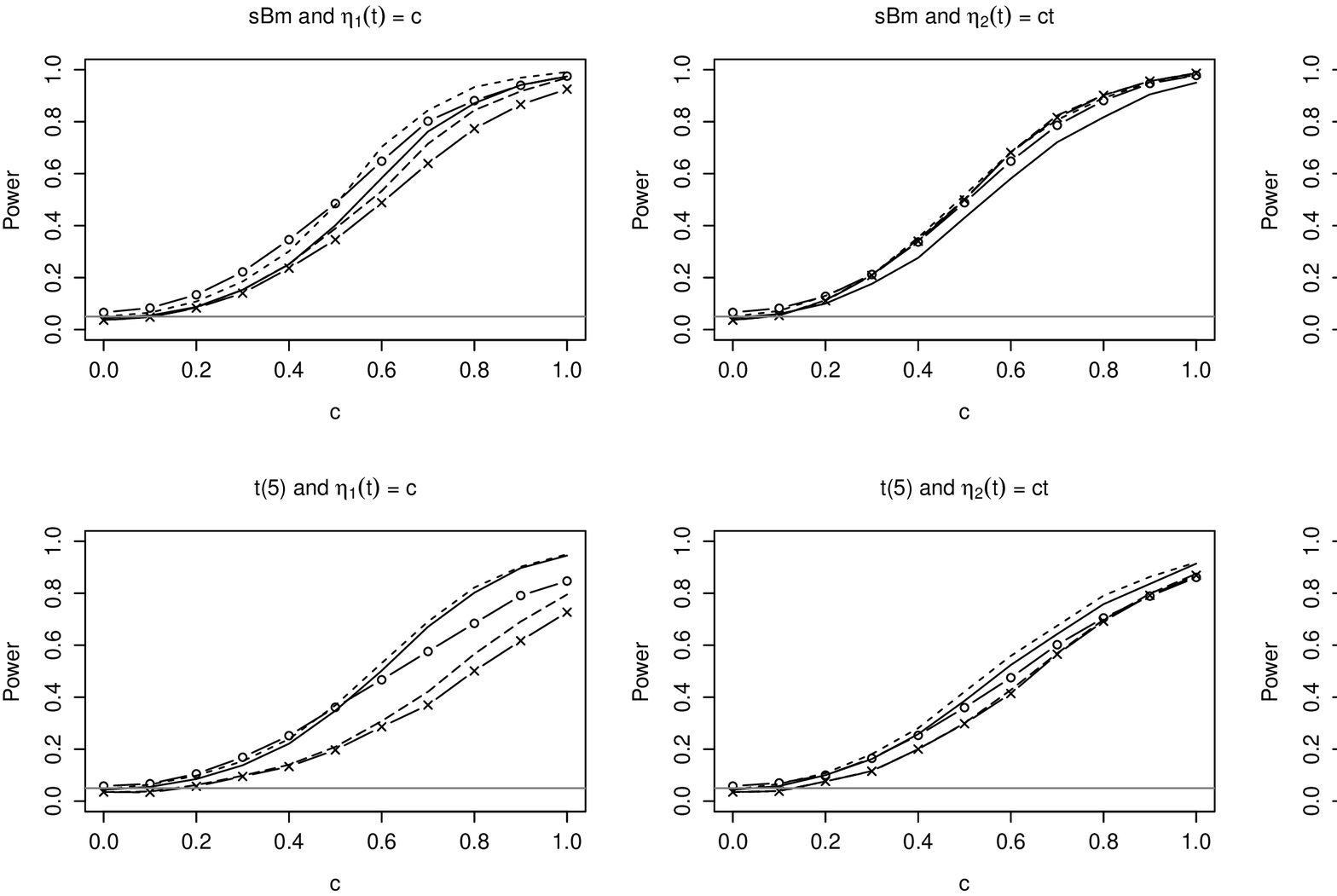}%
  \end{adjustbox}
\end{figure}

\indent For each of the tests, the difference between its observed size and the nominal $5\%$ level is statistically insignificant under the sBm, the t(3) and the t(5) distributions. However, the sizes of the tests based on $T_{1}$, $T_{2}$ and $T_{3}$ are $1\%$, $1\%$ and $2.1\%$ for the t($1$) distribution, all of which are significantly lower than the nominal level. On the other hand, the sizes of the tests based on spatial signs and signed ranks are not significantly different from the nominal level even under t(1) distribution. \\
\indent We first discuss the performance of different tests under the sBm distribution. The test based on ${\bf T}_{SR}$ is significantly more powerful than the tests based on $T_{1}$ and $T_{2}$ for the shifts ${\boldsymbol{\eta}}_{1}(t)$ and ${\boldsymbol{\eta}}_{3}(t)$. The power of the test based on ${\bf T}_{S}$ is not significantly different from that of the test based on $T_{1}$ for these two shifts although the former test has slightly more power than the latter test in our simulations. The test based on ${\bf T}_{S}$  significantly outperforms the test based on $T_{2}$ for these two shifts. The test based on ${\bf T}_{SR}$ is significantly more powerful than the test based on $T_{3}$ for large values of $c$ in $\eta_{1}(t)$. The test based on $T_{3}$ significantly outperforms the tests based on ${\bf T}_{S}$ and ${\bf T}_{SR}$ for the shift ${\boldsymbol{\eta}}_{3}(t)$. The power of the test based on ${\bf T}_{SR}$ is not significantly different from the powers of the tests based on $T_{1}$, $T_{2}$ and $T_{3}$ for the shift ${\boldsymbol{\eta}}_{2}(t)$, but the test based on ${\bf T}_{S}$ is significantly less powerful than all its competitors for this shift under the sBm distribution. \\
\indent We next consider the t(5) distribution. The tests based on ${\bf T}_{S}$ and ${\bf T}_{SR}$ are significantly more powerful than the tests based on $T_{1}$ and $T_{2}$ for all the three shift alternatives considered. The tests based on ${\bf T}_{S}$ and ${\bf T}_{SR}$ are significantly more powerful than the test based on $T_{3}$ for large values of $c$ in the shifts ${\boldsymbol{\eta}}_{1}(t)$ and ${\boldsymbol{\eta}}_{2}(t)$. Like in the case of the sBm distribution, the test based on $T_{3}$ is significantly more powerful than the tests based on ${\bf T}_{S}$ and ${\bf T}_{SR}$ for the shift $\eta_{3}(t)$. As in the asymptotic power study in Sect. \ref{3}, the performance of the mean based tests degrades significantly under the heavy tailed t($5$) distribution, while the tests based on spatial signs and signed ranks are less affected. The findings of this finite sample power study under the sBm and the t(5) distributions are very similar to those of the asymptotic power study.
\begin{figure}
  \begin{adjustbox}{addcode={\begin{minipage}{\width}}{\caption{%
      Plots of the asymptotic powers of the tests based on ${\bf T}_{S}$ (---), ${\bf T}_{SR}$ (- - -), $T_{1}$ (-- -- --), $T_{2}$ (-- $\times$ --) and $T_{3}$ (-- $\circ$ --) for the t(1) and the t($3$) distributions.
      }\label{Fig:3}\end{minipage}},rotate=90,center}
      \centering
      \includegraphics[height=4.3in,width=8in]{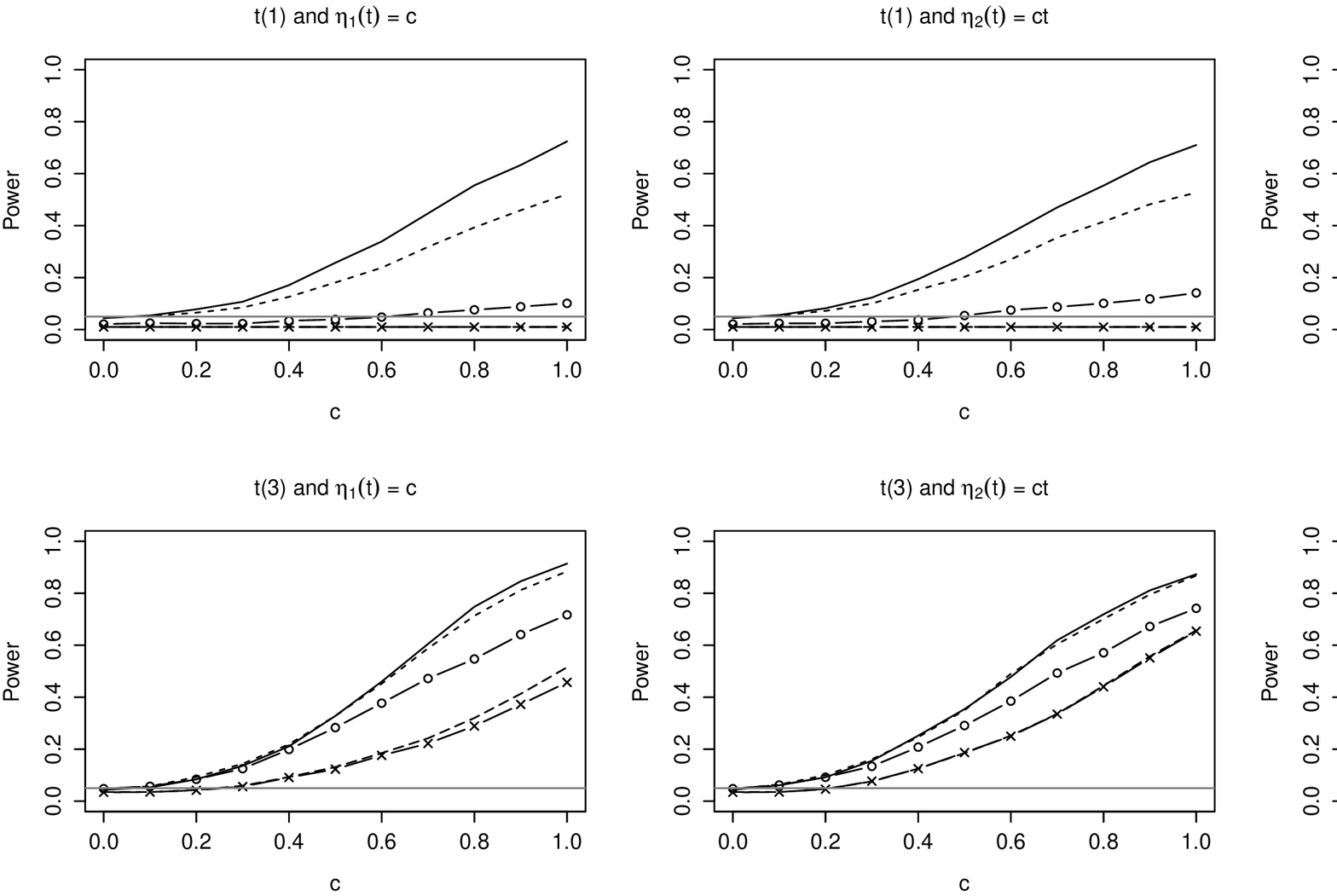}%
  \end{adjustbox}
\end{figure}

\indent The tests based on $T_{1}$, $T_{2}$ and $T_{3}$ have very low powers for all the shift alternatives considered under the t($1$) distribution and are significantly outperformed by the tests based on ${\bf T}_{S}$ and ${\bf T}_{SR}$. The non-existence of moments of the t($1$) distribution severely affects the performance of the mean based tests, but the tests based on spatial signs and signed ranks are relatively less affected. \\
\indent Under the t(3) distribution, the tests based on ${\bf T}_{S}$ and ${\bf T}_{SR}$ significantly outperform the other competitors for the shifts ${\boldsymbol{\eta}}_{1}(t)$ and ${\boldsymbol{\eta}}_{2}(t)$. The test based on $T_{3}$ has significantly more power than the tests based on ${\bf T}_{S}$ and ${\bf T}_{SR}$ only for $c \in (1,1.5)$ in the shift ${\boldsymbol{\eta}}_{3}(t)$. However, the two latter tests significantly outperform the tests based on $T_{1}$ and $T_{2}$ for the shift ${\boldsymbol{\eta}}_{3}(t)$. \\
\indent Among the two proposed tests, for all the shifts considered, the spatial sign test based on ${\bf T}_{S}$ is significantly more powerful than the spatial signed-rank test based on ${\bf T}_{SR}$ for the very heavy-tailed t(1) distribution, while it is significantly less powerful under the light tailed sBm distribution. The powers of these two tests are not significantly different for the t(3) and the t(5) distributions. However, for the t(3) distribution, the spatial sign test is slightly more powerful than the spatial signed-rank test for all the shifts considered, while the situation is reversed for the t(5) distribution. These observations are similar to the relative performance of the spatial sign and signed-rank tests for finite dimensional data under multivariate t distributions (see \citet{MOT97}).

\subsection{Robustness study of different tests}
\label{4.2}
\indent It is known that the univariate sign and signed-rank tests are robust against the presence of outliers in the data, unlike some of the mean based tests like the t test. The influence functions of the sign and signed-rank tests are bounded, and as a result, the sizes and the powers of these tests, even under moderately high contamination proportions, are not much different from their powers in the uncontaminated case (see \citet[Chapter 3]{HRRS86}). The definition of the influence function for a test as discussed in \citet[Chapter 3, p. 191]{HRRS86} can be extended to the infinite dimensional setup by using the notion of a G\^ateaux derivative. It can be shown that the influence functions of the tests based on ${\bf T}_{S}$ and ${\bf T}_{SR}$ are given by $[{\bf J}^{(1)}_{{\bf 0}}]^{-1}({\bf S}_{{\bf w}})$ and $2[{\bf J}^{(2)}_{{\bf 0}}]^{-1}\{E({\bf S}_{{\bf w} + {\bf W}_{1}})\}$ provided the inverses exist, where ${\bf w} \in {\cal X}$, and ${\bf J}^{(1)}_{{\bf 0}}$ and ${\bf J}^{(2)}_{{\bf 0}}$ are defined before Thm. \ref{thm2}. When ${\cal X}$ is a separable Hilbert space, both of these influence functions are bounded in norm under the assumptions analogous to those given in Proposition 2.1 in \citet{CCZ13}. So, it is expected that the tests based on ${\bf T}_{S}$ and ${\bf T}_{SR}$ will be robust in such cases. \\
\indent We now conduct an empirical study to assess the robustness of spatial sign and signed-rank tests proposed in this paper in comparison with the mean based tests using $T_{1}$, $T_{2}$ and $T_{3}$. Let the distribution of ${\bf Y}_{1} - {\bf X}_{1} \in L_{2}[0,1]$ be of the form $(1-\epsilon)P + {\epsilon}Q$, where $P$ is the sBm distribution considered earlier, $Q$ is the Brownian motion with covariance kernel $K(t,s) = 16\min(t,s)$, $t,s \in [0,1]$, and assume that the contamination proportion $\epsilon$ takes values $1/20$, $3/20$ and $5/20$. So, even under contamination, {\it the null hypothesis remains unchanged}. As before, the sample size is chosen to be $n = 20$, and each sample curve is observed at $250$ equispaced points in $[0,1]$. For computing the powers of the tests in the presence of outliers in the data, we have considered the location shift ${\boldsymbol{\Delta}}(t) = 0.8t$, $t \in [0,1]$. This choice ensures that the powers of the tests, even under contamination, are not too close to the nominal $5\%$ level nor too close to one, and thus a meaningful comparison between the tests can be made. The following table gives the sizes and the powers of different tests, which are evaluated by averaging over $1000$ Monte-Carlo simulations, for various contamination models considered.

\begin{table}
\caption{Sizes and powers of some tests at nominal $5\%$ level}
\label{Tab:1}
\begin{center}
\begin{tabular}{p{1cm}p{1cm}p{1cm}p{1cm}p{1cm}p{1cm}p{1cm}p{1cm}p{1cm}p{1cm}p{1cm}p{1cm}}
\hline\noalign{\smallskip}
 & \multicolumn{5}{ >{\centering\arraybackslash}p{5cm}}{Size} & & \multicolumn{3}{ >{\centering\arraybackslash}p{3cm}}{Power} \\
 \cline{2-6}\cline{8-10}\noalign{\smallskip}
 $\epsilon$ & ${\bf T}_{S}$ & ${\bf T}_{SR}$ & $T_{1}$ & $T_{2}$ & $T_{3}$ & & ${\bf T}_{S}$ & ${\bf T}_{SR}$ & $T_{3}$ \\
\noalign{\smallskip}\hline\noalign{\smallskip}
0    & 0.044 & 0.05 & 0.037 & 0.036 & 0.066 & & 0.817 & 0.891 & 0.881 \\
1/20 & 0.053 & 0.056 & 0.118 & 0.115 & 0.057 & & 0.777 & 0.818 & 0.722 \\
3/20 & 0.044 & 0.046 & 0.288 & 0.281 & 0.04 & & 0.665 & 0.616 & 0.446 \\
5/20 & 0.044 & 0.05 & 0.397 & 0.37 & 0.043 & & 0.521 & 0.423 & 0.273 \\
\noalign{\smallskip}\hline\noalign{\smallskip}
\end{tabular}
\end{center}
\end{table}

\indent It is seen from Table \ref{Tab:1} that except the tests based on $T_{1}$ and $T_{2}$, the sizes of the other tests considered are close to the nominal $5\%$ level in the contaminated as well as the uncontaminated situations. The sizes of the mean based tests that use $T_{1}$ and $T_{2}$ are much larger than the nominal level under contamination though the null hypothesis remains valid. This is probably because the asymptotic critical values of these two tests are under-estimated in the presence of contamination in the data. Interestingly, the size of the mean based test using $T_{3}$ is unaffected under contamination. It seems that the standardization involved in the statistic $T_{3}$ keeps the size of the test under control even when there are outliers in the data. \\
\indent For our power study in the presence of outliers in the data, we have excluded the tests based on $T_{1}$ and $T_{2}$ since they have very high sizes under contamination. The powers of all the tests decrease from their powers in the uncontaminated situation since the contamination increases the variability in the sample. The powers of the proposed spatial sign and signed-rank tests are significantly higher than the power of the mean based test using $T_{3}$ for all the contamination models considered. Recall that the test based on $T_{3}$ was significantly more powerful than the test based on ${\bf T}_{S}$, and its power was not significantly different from that of the test based on ${\bf T}_{SR}$ for the same location shift in the uncontaminated case. \\
\indent The powers of the tests based on spatial sign and signed rank are comparable when the contamination proportion is at most $3/20$. However, when the proportion of contamination is $5/20$, the spatial sign test becomes significantly more powerful than the spatial signed-rank test. This behaviour of the spatial sign and signed-rank tests is similar to that of the univariate sign and signed-rank tests, where it is known that the sign test is more robust than the signed-rank test for higher levels of contamination (see \citet[Chapter 3]{HRRS86}).

\section*{Concluding remarks}
\label{5}
\indent In this paper, we have studied a spatial sign test and a spatial signed-rank test for paired sample problems in infinite dimensional spaces. The tests are infinite dimensional extensions of the multivariate spatial sign and signed-rank tests considered earlier by \citet{MO95}, \citet{MOT97} and \citet{Mard99} for finite dimensional data. We have shown that the asymptotic distributions of the proposed test statistics are Gaussian after appropriate centering and scaling. It is shown that both of the proposed tests are consistent for a class of alternatives that includes the standard location shift alternatives. It is observed that under suitable sequences of shrinking location shift alternatives, the spatial sign and signed-rank tests are asymptotically more powerful than some of the mean based paired sample tests for infinite dimensional data when the underlying distribution has heavy tails. Further, even for some infinite dimensional Gaussian distributions, the spatial sign and signed-rank tests considered in this paper are more powerful than some of the mean based tests for paired sample problems. The asymptotic results are corroborated by finite sample  simulation results. \\
\indent The proposed spatial sign and signed-rank test statistics can be computed very easily, and the associated tests can be implemented using their asymptotic Gaussian distributions. The covariance operators of the asymptotic Gaussian distributions can be easily estimated as discussed in Sect. \ref{2.1}. For data in a Hilbert space, the implementations of the proposed tests become further simplified in view of the fact that in such a space, the squared norm of a Gaussian random element is distributed as a weighted sum of independent chi-square variables each with one degree of freedom.  \\
\indent The tests based on spatial signs and signed ranks studied in this paper do not require any moment assumption unlike some mean based tests for infinite dimensional data. These tests are also robust against contamination of the sample by outliers unlike the mean based tests considered in the paper. Between the two tests proposed in this paper, it is observed that the spatial sign test is better for some very heavy-tailed distributions, while the spatial signed-rank test outperforms it in some cases when the distribution has lighter tails. Further, the spatial sign test is more robust than the spatial signed-rank test, when there is a large amount of contamination in the sample.

\section*{Acknowledgement}
Research of the first author is partially supported by the Shyama Prasad Mukherjee Fellowship of the Council of Scientific and Industrial Research, Government of India.

\end{document}